\DeclarePairedDelimiter{\ceil}{\lceil}{\rceil}
\newtheorem*{rep@theorem}{\rep@title}
\newcommand{\newreptheorem}[2]{%
\newenvironment{rep#1}[1]{%
 \def\rep@title{#2 \ref{##1}}%
 \begin{rep@theorem}}%
 {\end{rep@theorem}}}
\newtheorem{theorem}{Theorem}
\newtheorem{lem}[theorem]{Lemma}
\newtheorem{cor}[theorem]{Corollary}
\theoremstyle{definition}
\newtheorem{definition}[theorem]{Definition}
\theoremstyle{remark}
\newtheorem{claim}{Claim}
\newenvironment{subproof}[1][\proofname]{%
  \begin{proof}[#1]%
}{%
  \end{proof}%
}
\DeclareMathOperator{\2}{\{0,1\}}
\DeclareMathOperator{\Hilb}{\mathcal{H}}
\DeclareMathOperator{\sol}{sol}
\DeclareMathOperator{\poly}{poly}
\DeclareMathOperator{\Ptime}{\textsc{P}}
\DeclareMathOperator{\NP}{\textsc{NP}}
\DeclareMathOperator{\BQP}{\textsc{BQP}}
\DeclareMathOperator{\BPP}{\textsc{BPP}}
\DeclareMathOperator{\FBQP}{\textsc{FBQP}}
\DeclareMathOperator{\FBPP}{\textsc{FBPP}}
\DeclareMathOperator{\FP}{\textsc{FP}}
\DeclareMathOperator{\FNP}{\textsc{FNP}}
\DeclareMathOperator{\QMA}{\textsc{QMA}}
\DeclareMathOperator{\PP}{\textsc{PP}}
\DeclareMathOperator{\QCMA}{\textsc{QCMA}}
\DeclareMathOperator{\PH}{\textsc{PH}}
\DeclareMathOperator{\SAT}{SAT}
\DeclareMathOperator{\len}{len}
\DeclareMathOperator{\overlap}{\textsc{overlap}}
\DeclareMathOperator{\sat}{\textsc{sat}}
\DeclareMathOperator{\search}{\textsc{search}}
\title{$\BQP$, meet $\NP$: Search-to-decision reductions and approximate counting}
\author{Sevag Gharibian \footnote{Department of Computer Science and Institute for Photonic Quantum Systems (PhoQS), Paderborn University,
Germany. Email: \{sevag.gharibian, jonas.kamminga\}@upb.de.} \and Jonas Kamminga $^*$}
\date{}
\begin{document}
\maketitle
\begin{abstract}
    What is the power of polynomial-time quantum computation with access to an NP oracle?
    In this work, we focus on two fundamental tasks from the study of Boolean satisfiability (SAT) problems: search-to-decision reductions, and approximate counting.
    We first show that, in strong contrast to the classical setting where a poly-time Turing machine requires $\Theta(n)$ queries to an NP oracle to compute a witness to a given SAT formula, quantumly $\Theta(\log n)$ queries suffice.
    We then show this is tight in the black-box model --- any quantum algorithm with ``$\NP$-like'' query access to a formula requires $\Omega(\log n)$ queries to extract a solution with constant probability.

    Moving to approximate counting of SAT solutions, by exploiting a \emph{quantum} link between search-to-decision reductions and approximate counting, we show that existing classical approximate counting algorithms are likely optimal. First, we give a lower bound in the ``$\NP$-like'' black-box query setting: Approximate counting requires $\Omega(\log n)$ queries, even on a quantum computer. We then give a ``white-box'' lower bound (i.e. where the input formula is not hidden in the oracle) --- if there exists a randomized poly-time classical or quantum algorithm for approximate counting making $o(\log n)$ $\NP$ queries, then $\BPP^{\NP[o(n)]}$ contains a $\Ptime^{\NP}$-complete problem if the algorithm is classical and $\FBQP^{\NP[o(n)]}$ contains an $\FP^{\NP}$-complete problem if the algorithm is quantum.
\end{abstract}

\section{Introduction}\label{scn:intro}

A fundamental direction of study in classical complexity theory is: What can P or BPP achieve with access to an NP oracle? Here, the study of relational-problems, i.e. where the output is not a single bit, but a string, has proven particularly fruitful. (Formally, this refers to the classes FunctionP ($\FP$) and FunctionBPP ($\FBPP$); see Section \ref{sec:prelim}.) A first direction here has been \emph{search-to-decision} reductions. Namely, given a SAT formula $\varphi$ and an NP oracle, it is well-known that $O(n)$ queries suffice to extract a solution to $\varphi$ (assuming one exists). Moreover, this is likely classically optimal: an $o(n)$-query algorithm would violate the Exponential Time Hypothesis~\cite{impagliazzo2001complexity}\footnote{If there is an $\FP^{\NP[o(n)]}$ algorithm outputting a satisfying assignment then SAT can be decided in time $2^{o(n)}$ as follows: Enumerate through all possible strings $y$ of $o(n)$ NP query answers, which takes $2^{o(n)}$ time. For each $y$, run the $\FP$ machine on $y$ to obtain candidate solution $x$, and check if $\varphi(x)=1$.}. (Before ETH was posited,  Krentel showed that if $O(\log n)$ queries suffice, then $\Ptime = \NP$~\cite{krentel1986complexity}.)

A second key direction has been \emph{approximate counting} of the number of solutions of a Boolean formula, first studied by Stockmeyer~\cite{stockmeyer1983complexity}. Approximate counting has proven widely influential, even having applications in quantum advantage frameworks such as Aaronson and Arkhipov's Boson Sampling~\cite{aaronson2011computational}. Stockmeyer showed~\cite{stockmeyer1983complexity} that an $\FBPP$ machine making $O(\log n \log\log n)$ NP queries suffices to approximate the number of solutions within a constant multiplicative factor, and that at least $\Omega(\log n)$ queries are required. This gap was closed by Chakraborty, Meel, and Vardi, who improved the upper bound to $O(\log n)$ queries~\cite{chakraborty2016algorithmic}. Thus, the NP-query complexity of these two tasks is now well understood.\\
\vspace{-1mm}

\noindent \emph{The quantum setting.} The guiding question of this work is the next natural frontier: \emph{Can \emph{quantum} access to an NP oracle reduce the number of required queries?} For relation problems, this is a particularly intriguing question: Intuitively, a single classical NP query yields only $1$ bit of information, suggesting that if an $\FP$ machine wishes to produce an $n$-bit output, then $\Theta(n)$ queries are necessary. (Indeed, as mentioned above, this is the case for search-to-decision reduction of SAT, assuming ETH.) \emph{Quantum} access to an oracle, however, can sometimes bypass this obstacle, producing $n$ bit outputs with just a \emph{single} query. A notable instance of this is the Bernstein-Vazirani algorithm~\cite{bernstein1993quantum}, which requires just a single query to an oracle encoding an affine function $f(x)=a\cdot x + b$ to output string $x\in\set{0,1}^n$. We thus ask: \emph{Can a FunctionBQP ($\FBQP$) machine make fewer queries to an NP oracle to extract a SAT solution or approximately count the number of solutions?}

\paragraph{Our results.} In this work, we give tight resolutions to this question for both tasks. \\
\vspace{-1mm}

\noindent \emph{Search-to-decision reductions.} As mentioned earlier, classically, $\Theta(n)$ $\NP$ queries are necessary and sufficient for search-to-decision reduction for SAT, assuming ETH. Before proceeding to our main results, the lower bounds, we show that $O(\log n)$ queries suffice quantumly.
\begin{theorem}\label{thm:upperbound}
    $\FNP \subseteq \FBQP^{\NP[\log]}$.
\end{theorem}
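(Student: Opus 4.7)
The plan is to chain three ingredients: approximate counting via the classical Chakraborty--Meel--Vardi upper bound, Valiant--Vazirani isolation, and a Bernstein--Vazirani-style witness extraction. Since $\SAT$ is $\FNP$-complete, it suffices to solve search-$\SAT$ for a formula $\varphi$ on $n$ variables, producing some $x^{\ast}\in\{0,1\}^n$ with $\varphi(x^{\ast})=1$ whenever one exists.

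The quantum heart of the argument is that an $n$-bit witness of a formula $\varphi'$ with a \emph{unique} satisfying assignment $x^{\ast}$ can be extracted in a single coherent NP query. Indeed, consider the NP language $L=\{(\varphi',y):\exists x,\ \varphi'(x)=1\wedge y\cdot x=1 \bmod 2\}$. Under the uniqueness promise, $(\varphi',y)\in L$ iff the linear function $f(y)=y\cdot x^{\ast}$ equals $1$, so one quantum query to a standard NP oracle realises the Bernstein--Vazirani oracle $\ket{y}\ket{b}\mapsto\ket{y}\ket{b\oplus f(y)}$, after which the textbook BV circuit recovers $x^{\ast}$ with certainty.

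To remove the uniqueness promise I would invoke Valiant--Vazirani: conjoining $k$ uniformly random parity constraints to $\varphi$ produces a formula $\varphi'$ whose solution set has size exactly one with probability $\Omega(1)$, provided $2^k$ is within a constant factor of the number $N$ of satisfying assignments of $\varphi$. Calibrating $k$ requires only a constant-factor estimate of $N$, which Chakraborty--Meel--Vardi provide using $O(\log n)$ classical NP queries.

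The final algorithm is: (i) one NP query to test satisfiability of $\varphi$, halting with $\bot$ if unsatisfiable; (ii) $O(\log n)$ queries to compute $\widetilde{N}=\Theta(N)$ and set $k=\lfloor\log \widetilde{N}\rfloor$; (iii) $O(1)$ independent trials, each drawing fresh parities to form $\varphi'$, running BV with one quantum NP query to produce a candidate $x$, and classically verifying $\varphi(x)=1$ at no query cost (any satisfying assignment of $\varphi'$ automatically satisfies $\varphi$). The main point to watch is that BV must fail gracefully when VV does not isolate a unique assignment: outside the isolation event the NP oracle computes a nonlinear function of $y$ and the BV output is meaningless, but because verification is oracle-free this only depresses the per-trial success probability to a constant, which the $O(1)$ repetitions amplify to $\geq 2/3$. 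Total NP query cost: $O(\log n)+O(1)=O(\log n)$.
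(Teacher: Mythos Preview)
Your proposal is correct and follows essentially the same approach as the paper: approximate counting via Chakraborty--Meel--Vardi to calibrate $k$, Valiant--Vazirani isolation with that $k$ to reduce to a uniquely satisfiable formula with constant probability, and then the Bernstein--Vazirani trick (one coherent $\NP$ query) to extract the witness, with oracle-free verification and $O(1)$ repetitions to boost. The paper's proof is organized identically, differing only in presentation order and in stating the isolation step via pairwise-independent hash functions rather than random parity constraints.
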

\noindent Here, FunctionNP ($\FNP$) asks to produce a witness to an NP relation (Section \ref{sec:prelim}), and $\NP[\log]$ in the exponent denotes $O(\log n)$ $\NP$ queries. We remark that independently and prior to this work, Irani, Natarajan, Nirkhe, Rao and Yuen~\cite{irani2021quantum} showed that for SAT formulae with a \emph{unique} satisfying assignment, a single $\NP$ query suffices to extract said solution (see Related Work). 

Is Theorem~\ref{thm:upperbound} tight? Unfortunately, since we are in the \emph{white-box model} for search-to-decision reductions (i.e. the input formula $\varphi$ is given as input to the $\FBQP$ machine, rather then hidden in the oracle), even a single-query lower bound would imply\footnote{If one could show that any $\FBQP$ machine requires at least $1$ $\NP$ query for search-to-decision, then $\NP\not\subseteq\BQP$. This is because if $\NP\subseteq \BQP$, then $\FNP\subseteq\FBQP$ via the standard search-to-decision reduction for SAT.} $\NP \not\subseteq \BQP$, and is thus likely out of reach. We hence move to the \emph{black-box model} in order to prove a lower bound. For this, note that the standard quantum query model does not capture the power of ``existential'' or $\NP$ queries. Rather, we introduce the (straightforward quantum reformalization of) Stockmeyer's~\cite{stockmeyer1983complexity} existential query model:

\begin{definition}[Existential query model]
    An algorithm in the existential query model has access to the input string $x \in \2^N$ via the following existential query gate:
    \begin{equation}
        O^{\exists}_x \colon \ket{z} \mapsto (-1)^{\overlap(x,z)} \ket{z}
    \end{equation}
    where $z \in \2^N$ and $\overlap(x,z) = 1$ if there is an $i$ such that $x_i = z_i = 1$ and $0$ otherwise.
\end{definition}
\noindent 
In this model, we show a matching lower bound for Theorem~\ref{thm:upperbound}.
\begin{theorem}
    \label{blackboxlowerbound}
    Any quantum algorithm with existential query access to $x\in \2^N$ that outputs a $i$ with $x_i=1$ with constant probability needs to make $\Omega(\log\log N) = \Omega(\log n)$ existential queries.
\end{theorem}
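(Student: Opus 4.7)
My plan is to combine symmetrization with the polynomial method, specialized to the existential query model. First, I would symmetrize the algorithm: by prepending a uniformly random permutation $\pi \in S_N$ of the coordinates of $x$ and post-composing the output with $\pi^{-1}$, one obtains a permutation-symmetric algorithm using the same $T$ queries whose success probability on input $x$ depends only on $|x| = k$. Call this $q(k)$; correctness forces $q(k) \geq 2/3$ for all $k \in \{1, \ldots, N\}$, while $q(0)$ is irrelevant since no valid output exists.

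Next, I would apply the polynomial method. After $T$ existential queries, the acceptance probability on input $x$ is a polynomial of degree at most $2T$ in the oracle response bits $\{\overlap(x, z)\}_{z}$. Taking the expectation over $x$ uniformly random of weight $k$, each monomial becomes, by inclusion-exclusion, a linear combination of terms $\binom{N - w}{k}/\binom{N}{k} = \prod_{j=0}^{w-1} (N - k - j)/(N - j)$, where $w$ ranges over sizes of unions of up to $2T$ query supports. Thus $q(k)$ itself is a polynomial in $k$, and a Markov--Bernstein-type inequality applied to the behavior ``$q(0)$ small, $q(k) \geq 2/3$ on $k \in \{1, \ldots, N\}$, $q \leq 1$ throughout'' should force $T = \Omega(\log \log N)$, provided the effective degree of $q$ can be controlled by $O(T)$.

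The main obstacle I anticipate is precisely controlling this effective degree. Since existential queries can involve sets of size up to $N$, the naive bound is $O(T N)$, which is useless. The crucial step is to show that, after averaging over random weight-$k$ strings and restricting $k$ to an appropriate grid---for instance the geometrically spaced values $k \in \{1, 2, 4, \ldots, N/2\}$---the function $q$ collapses to effective degree $O(T)$ in a rescaled variable such as $\log_2 k$. I would attempt this through a careful moment analysis of the quantities $\binom{N - w}{k}/\binom{N}{k}$, exploiting that for random weight-$k$ inputs these probabilities concentrate in predictable ways tied to the product $wk/N$. If this effective-degree bound resists a direct attack, I would instead try a reduction to a symmetric decision problem (such as distinguishing $|x| = k$ versus $|x| = 2k$ for $k$ on this grid), whose quantum query complexity in the existential model should be more amenable to a direct polynomial-method argument and from which the search lower bound follows by a standard amplification.
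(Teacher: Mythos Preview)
Your proposal correctly identifies the central obstacle to a direct polynomial-method attack: a single existential query with mask $z$ computes $1-\prod_{i:z_i=1}(1-x_i)$, which has degree $|z|$ in $x$, so after $T$ queries the symmetrized success probability $q(k)$ can have degree $\Theta(TN)$ in $k$. Your proposed fix---restricting to the geometric grid $k\in\{1,2,4,\ldots\}$ and hoping the effective degree in $j=\log_2 k$ collapses to $O(T)$---is not substantiated and appears to fail. Concretely, on this grid $\mathbb{E}_{|x|=k}[\overlap(x,z)] \approx 1-e^{-|z|\,2^j/N}$ behaves like a sigmoid in $j$ centered at an \emph{arbitrary} point $\log_2(N/|z|)$ chosen by the algorithm; there is no reason a degree-$2T$ combination of such sigmoids at arbitrary centers should be (or be well-approximated by) a degree-$O(T)$ polynomial in $j$, so no Markov--Bernstein argument applies. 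Moreover, once you pass to the grid you lose the one constraint that created tension, namely $q(0)=0$; on $\{1,2,\ldots,N\}$ alone, $q\equiv 1$ is consistent with correctness. Your fallback---lower-bounding the decision problem ``distinguish $|x|=k$ from $|x|=2k$ for all $k$ on the grid''---is precisely approximate counting, which the paper derives only \emph{as a corollary} of the search lower bound; attacking it directly faces the same degree blowup.

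The paper sidesteps all of this with a reduction rather than a direct argument. It shows that a $q$-query existential search algorithm on $\{0,1\}^N$ yields (i) a uniform \emph{sampler} from the support of $x$, by pre- and post-composing with a random permutation; (ii) a binary-search solver on length-$n$ monotone strings $y$, by embedding $y$ into a length-$N$ string in which $y_j$ is repeated $N/2^j$ times, so a uniform sample lands in the block of the first $1$ with probability $>1/2$; and (iii)---the key step you are missing---on monotone inputs every existential query is equivalent to a single \emph{standard} query (query the largest index in the mask). This converts the problem into ordinary quantum binary search on $\{0,1\}^n$, where Ambainis' $\Omega(\log n)$ lower bound applies directly.
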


\noindent\emph{Approximate counting.} Recall that, classically, approximate counting requires $\Theta(\log n)$ NP queries. We next exploit the fact that the technique behind the proof of Theorem \ref{thm:upperbound} (c.f.~\cite{irani2021quantum}) reveals a genuinely \emph{quantum} link between search-to-decision reduction and approximate counting. This allows us to show the following tight black-box lower bound on \emph{quantum} algorithms in the existential query model:
\begin{cor}
    \label{cor:blackboxnoapxcount}
    Any quantum algorithm with existential query access to a string $x \in \2^N$ which outputs an estimate $c$ such that $2^{|x| - 1} \le c < 2^{|x|}$, where $|x|$ is the Hamming weight of $x$, requires at least $\Omega(\log\log N)$ queries to the oracle.
\end{cor}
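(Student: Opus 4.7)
The plan is to reduce the search problem of Theorem \ref{blackboxlowerbound} to approximate counting with only constant query overhead. First observe that the output guarantee $2^{|x|-1}\le c<2^{|x|}$ is tight enough to recover $|x|$ \emph{exactly}: from $c$ one reads off $|x|=\lfloor\log_2 c\rfloor+1$ (with the degenerate case $|x|=0$ handled separately). So without loss of generality any $T$-query approximate counting algorithm satisfying the hypothesis in fact produces $|x|$ precisely from $T$ existential queries.

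Next, I would invoke the quantum witness-extraction technique underlying the proof of Theorem \ref{thm:upperbound} (and the prior work of Irani, Natarajan, Nirkhe, Rao and Yuen). The key claim I would argue is that once $|x|$ is known, a Bernstein--Vazirani style phase-kickback argument extracts an index $i$ with $x_i=1$ using only $O(1)$ further existential queries to $O^{\exists}_x$; this is precisely the ``quantum link'' advertised in the introduction, viewed in reverse. Concatenating the counting routine with this extraction subroutine then yields a search algorithm using $T+O(1)$ queries. Applying Theorem \ref{blackboxlowerbound} forces $T+O(1)=\Omega(\log\log N)$, whence $T=\Omega(\log\log N)$, as required.

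The main technical obstacle is establishing the $O(1)$-overhead extraction claim cleanly: the $|x|=1$ case is the single-query Bernstein--Vazirani argument of \cite{irani2021quantum}, but the extension to arbitrary known Hamming weight is the delicate step, and must be read out of the proof of Theorem \ref{thm:upperbound} by factoring that algorithm into an approximate-counting phase followed by a constant-query extraction phase. This factorization is essential: if extraction required $\Omega(\log\log N)$ queries, the additive reduction would be vacuous (since the search lower bound itself is only $\Omega(\log\log N)$), and a naive binary-search style reduction from search to approximate counting would instead blow the query count up multiplicatively by $\log N$, which is far too lossy to match Theorem \ref{blackboxlowerbound}. Hence the corollary is really a statement that the Theorem \ref{thm:upperbound} construction, read backwards, is a constant-overhead reduction from search to exact recovery of $|x|$.
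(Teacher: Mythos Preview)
Your proposal is correct and takes essentially the same approach as the paper: use the hypothesized counter, follow it with the constant-query extraction phase of Theorem~\ref{thm:upperbound} (Valiant--Vazirani isolation via Lemma~\ref{VVlem}, which costs zero queries once the count is known, plus the single-query BV trick of Lemma~\ref{iranilemma}), and then invoke Theorem~\ref{blackboxlowerbound}. The paper makes the ``factoring'' you describe explicit by citing those two lemmas; your observation that $c$ pins down $|x|$ exactly is correct but slightly stronger than needed, since Lemma~\ref{VVlem} only requires $|x|$ up to a factor of two.
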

\noindent Above, $x$ denotes the truth table of the SAT formula, and so $N=2^n$ for $n$ the number of variables. Thus, for approximate counting, quantum algorithms do not outperform classical algorithms.

Finally, we prove a tight \emph{white-box} lower bound for \emph{both} classical or quantum algorithms which approximately count using $o(\log n)$ $\NP$ queries. As far as we are aware, no white-box lower bounds existed for either setting prior to this work.
\begin{cor}
    \label{cor:noapxcount}
    If there exists a \emph{classical} randomized poly-time algorithm for approximate counting, making $o(\log n)$ $\NP$ queries, then $\BPP^{\NP[o(n)]}$ contains a $\Ptime^{\NP}$-complete problem \footnote{Note that this is not quite as strong as $\Ptime^{\NP} \subseteq \BPP^{\NP[o(n)]}$ as overheads in the reduction to the $\Ptime^{\NP}$-complete problem may erase the reduction in the number of queries.}. Similarly, if there is a poly-time \emph{quantum} algorithm for approximate counting making $o(\log n)$ $\NP$ queries, then $\FBQP^{\NP[o(n)]}$ contains an $\FP^{\NP}$-complete problem.
\end{cor}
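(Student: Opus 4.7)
My plan is to exhibit a fixed $\FP^{\NP}$-complete problem together with a reduction to $O(n/\log n)$ approximate-counting calls, so that combining the reduction with the hypothesized $o(\log n)$-$\NP$-query approximate counter yields $o(n)$ $\NP$ queries in total. I would target Lex-Smallest-SAT (find the lexicographically smallest satisfying assignment, a standard $\FP^{\NP}$-complete problem); the corresponding $\Ptime^{\NP}$-complete decision problem is extracting a designated bit (say the $n$-th) of that assignment.

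The naive bit-by-bit algorithm uses $n$ adaptive $\NP$ queries, which combined with the hypothesized counter yields only $o(n\log n)$ queries---not $o(n)$. The central trick is to batch $\log n$ bits of the answer into a single approximate-counting call via a weighted-counting gadget. Having determined a prefix $a_1^*,\dots,a_i^*$, let $v\in\{0,\dots,n-1\}$ encode the next $\log n$ bits, let $C_v$ denote the number of satisfying extensions of $a_1^*\cdots a_i^* v$ in $\varphi$, and let $v^*$ be the smallest $v$ with $C_v>0$. I would build a $\poly(n)$-sized formula $\psi$ (containing $\varphi$ restricted to the current prefix, plus $n-1$ blocks of $2n$ auxiliary bits $w$ together with CNF constraints forcing the last $v$ blocks of $w$ to be zero) whose model count is
\begin{equation}
\#\psi \;=\; \sum_{v=0}^{n-1} C_v \cdot 2^{2n(n-1-v)}.
\end{equation}
Since $C_{v^*}\ge 1$ and $\sum_v C_v \le 2^n$, a direct calculation shows $\#\psi\in[2^{2n(n-1-v^*)},\,2^{2n(n-1-v^*)+n}]$, so $\lfloor(\log\tilde{C})/(2n)\rfloor = n-1-v^*$ for any constant-factor approximation $\tilde{C}$ of $\#\psi$. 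Thus a single approximate-counting call recovers the entire $\log n$-bit block, and iterating over $n/\log n$ blocks uses $(n/\log n)\cdot o(\log\poly(n)) = o(n)$ $\NP$ queries in total.

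In the classical setting this places Lex-Smallest-SAT in $\FBPP^{\NP[o(n)]}$, and reading off the $n$-th bit places the $\Ptime^{\NP}$-complete decision problem in $\BPP^{\NP[o(n)]}$. In the quantum setting, substituting the hypothesized quantum approximate counter (the remaining steps are classical post-processing) places Lex-Smallest-SAT in $\FBQP^{\NP[o(n)]}$. The main technical hurdle will be calibrating the gadget so that a \emph{constant-factor} approximate count suffices to pin down $v^*$ exactly: the exponent $2n(n-1-v)$ creates a gap of $2^{2n}$ between consecutive $v$-terms, which must swamp both the at-most-$2^n$ total mass from ``tail'' terms ($v>v^*$) and the $O(1)$ additive error in $\log\tilde{C}$ from constant-factor approximation. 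Verifying this robustness (and the $\poly(n)$-size CNF encoding of the ``last $v$ blocks of $w$ are zero'' constraint) are the main things to check.
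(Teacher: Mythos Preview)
Your approach is correct and genuinely different from the paper's. The paper does not use a weighted-counting gadget. Instead, it first shows that an $o(\log n)$-query approximate counter yields an $o(\log n)$-query \emph{almost-uniform isolation} procedure (pairwise-independent hashing \`a la Valiant--Vazirani, with the counter supplying the right hash range), and then iterates: in each round it applies this isolation to the conjunction of $n^2$ fresh copies of $\varphi$ restricted to solutions smaller than everything found so far, so that one round samples $n^2$ near-uniform solutions and cuts the remaining search space by a factor $\ge n$ with high probability. After $O(n/\log n)$ rounds the lex-smallest solution is isolated. Your gadget is more direct and more elementary---no hashing layer, no isolation abstraction, no probabilistic shrinkage argument---and it recovers the full lex-smallest assignment with purely classical post-processing (the paper invokes the Bernstein--Vazirani trick at the very end to output the full string in the quantum case). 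The paper's detour, on the other hand, factors cleanly through the intermediate notion of almost-uniform isolation, which it develops as a result of independent interest.

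Two small points you should address. First, the hypothesized counter succeeds only with constant probability per call, and you make $n/\log n$ adaptive calls; na\"ive median-boosting would cost an extra $\log n$ factor and overshoot to $o(n\log n)$. The fix (which the paper also uses) is per-round verification: after recovering a candidate $v^*$, spend two $\NP$ queries to check that the prefix $a_1^*\cdots a_i^*\, v^*$ is extendable while no $v'<v^*$ is; retry failed rounds and allow a constant-factor slack in the total number of attempts, so that by a Chernoff bound $n/\log n$ successes occur among $O(n/\log n)$ attempts with high probability, keeping the total at $o(n)$. Second, with block size $2n$ the lower bound $\log_2\tilde C \ge 2n(n-1-v^*)-O(1)$ can push your floor down by one; widening the blocks to, say, $3n$ bits (or disambiguating the two candidate values of $v^*$ with one extra $\NP$ query) resolves this.
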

\noindent While the complexity theoretic implications above are not as standard as $\Ptime=\NP$ or the collapse of PH, they nevertheless would arguably be striking if true. This is because an $\FP^{\NP}$-complete problem is finding a satisfying assignment of \emph{smallest lexicographical ordering}~\cite{krentel1986complexity}. Thus, using $o(n)$ queries would seem to require resolving the lex-ordering in sublogarithmic time (in the search space size), whereas classical and quantum algorithms for the closely related task of binary search cannot achieve sublogarithmic time~\cite{ambainis1999better}.

\paragraph{Proof techniques.} We now sketch our proof techniques, organized by topic.\\
\vspace{-1mm}

\noindent\emph{Search-to-decision.} Theorem \ref{thm:upperbound} follows rather straightforwardly from prior results. We first note that, quantumly, the solution of a formula with a unique satisfying assignment can be found with a single $\NP$ query using the Bernstein Vazirani algorithm (c.f. \cite{irani2021quantum}). Therefore, it remains to reduce an arbitrary formula to a uniquely satisfiable one. Valiant and Vazirani showed this can be done with probability $O\left(\frac{1}{n}\right)$ \cite{valiant1985np}. If, however, the approximate cardinality of the set of solutions is known, then this reduction succeeds with constant probability. Since approximately counting this cardinality can be done with $\log(n)$ $\NP$ queries~\cite{stockmeyer1983complexity,chakraborty2016algorithmic} classically, this gives a quantum algorithm for search-to-decision reduction using $O(\log n)$ queries and with \emph{constant} success probability. The success probability can now be boosted to any constant by running the algorithm a constant number of times, checking the outputs and outputting one of the satisfying assignments.

Next, we discuss the first of our main results, the black-box lower bound (Theorem \ref{blackboxlowerbound}). Here the proof requires more work. Most quantum query lower bounds fall in one of two groups: polynomial methods and adversary methods \cite{beals2001quantum,bennett1997strengths,ambainis2000quantum}. Unfortunately, these methods are tailored to the standard query model, and it is not clear how to effectively utilize them in our \emph{existential} query model. Another complicating factor for us is that search-to-decision is a \emph{relational} problem, not a function problem. That is, for a given input formula $\varphi$ there are \emph{multiple} correct outputs: all solutions of $\varphi$.

To overcome this, we instead give a reduction from ``(unstructured) search with existential queries'' to ``binary search with standard queries'', so we may invoke Ambainis' binary search lower bound \cite{ambainis1999better}. We show that an algorithm for search on strings of length $N = 2^n$ using $q$ existential queries induces an algorithm for binary search on a space of size $n$ with the same number of \emph{standard} queries. The basic idea for this is as follows. If we can find a solution, then we can also sample a random solution by randomly permuting the solution space. Furthermore, a binary search instance, which is the task of finding the index of the first $1$ in a monotonically increasing binary string $x \in \2^n$, can be modified in the following way. We make a new exponentially longer string $y \in \2^{N}$ where the first $2^{n-1}$ entries of $y$ are set to $x_1$, the next $2^{n-2}$ to $x_2$ and so on. The index of a uniformly random $1$ in $y$ corresponds to the index of the least $1$ in $x$ with probability $>\frac{1}{2}$. Therefore, transforming $x$ into $y$ and running the random solution sampling algorithm on it solves binary search on $x$ using $q$ existential queries. We now note that because $x$ is monotonically increasing, any existential query can be simulated by a standard query. The results of an existential query with string $z$ will be the same as simply querying the largest index $i$ where $z_i = 1$. As the last step of the proof, we invoke Ambainis result that binary search on a space of size $n$ takes $\Omega(\log n)$ queries to complete the proof \cite{ambainis1999better}.\\
\vspace{-1mm}

\noindent\emph{Approximate counting.} Our black-box lower bound (Corollary \ref{cor:blackboxnoapxcount}) follows by combining the proof of Theorem \ref{thm:upperbound} with Theorem \ref{blackboxlowerbound}. If there is an approximate counter making $q$ existential queries, then an index with $x_i = 1$ can be found with constant success probability and $q+1$ existential queries using the algorithm from Theorem \ref{thm:upperbound}. By Theorem \ref{blackboxlowerbound} this is only possible if $q = \Omega(\log n)$.

Finally, we discuss our white-box lower bound (Corollary \ref{cor:noapxcount}). We assume the existence of an approximate counter making $o(\log n)$ queries and show that we can, with $o(\log n)$ queries, find the lexicographically smallest solution of a formula $\varphi$, which is an $\FP^{\NP}$-complete problem \cite{krentel1986complexity}. The main idea is as follows. The algorithm from Theorem \ref{thm:upperbound} samples approximately from the uniform distribution on the set of solutions of $\varphi$. We run this algorithm on the AND of $n^2$ copies of $\varphi$, where we pick new sets of variables for each instance. This will give us $n^2$ solutions of $\varphi$ picked almost uniformly at random. We find the least $x_{min}$ and repeat the process on the formula $\varphi(x) \wedge (x \le x_{min})$. After every round the number of solutions will be divided by at least $n$ with high probability. Therefore, after $O(\log_n(|\sol(\varphi)|)) = O\left(\frac{n}{\log n}\right)$ rounds we will have found the lexicographically smallest solution of $\varphi$. As every round takes $o(\log n)$ queries, we have found the lexicographically smallest solution using $o(n)$ queries, completing the proof.

\paragraph{Related work.} As previously mentioned, Irani, Natarajan, Nirkhe, Rao and Yuen already independently showed that the Bernstein-Vazirani algorithm can be used to find the solution of a \emph{uniquely} satisfiable formula (our Lemma \ref{iranilemma}) \cite{irani2021quantum}. They combine this with the Valiant-Vazirani theorem to do search-to-decision reduction for $\QCMA$ (and $\NP$) with a single query and \emph{inverse polynomial} success probability. However, they do not further study the case of a \emph{constant} success probability as done here. They also show that there exists a quantum polynomial time algorithm that makes a single query to a $\PP$ oracle and generates a witness for a $\QMA$ problem up to polynomial precision. Additionally, they show that there is an oracle such that $\QMA$ search does not reduce to $\QMA$ decision relative to that oracle. 

Search-to-decision reduction has been studied in other settings. If only parallel (i.e. non-adaptive) queries to the $\NP$ oracle are allowed, then the standard $O(n)$-query search-to-decision reduction for $\NP$ does not work. Nevertheless, it has been shown that $O(n^2)$ parallel oracle queries suffice for classical randomized algorithms \cite{ben1989theory}. Kawachi, Rossman and Watanabe showed that this is optimal in a black-box model and give an algorithm with improved error tolerance \cite{kawachi2012query}. In a later work they also consider more general black-box models and show that $O(n^2)$ parallel classical queries are still needed \cite{kawachi2017query}.

The class $\BQP$ with access to various resources has been studied before. Aaronson, Ingram and Kretschmer~\cite{aaronson2021acrobatics} study oracle separations between various complexity classes involving $\BQP$ as an oracle or $\BQP$ with access to an oracle. Among other results the authors prove that there is an oracle relative to which $\BQP^{\NP} \not\subseteq \PH^{\BQP}$ and an oracle relative to which $\NP^{\BQP} \not\subseteq \BQP^{\PH}$. Aaronson, Buhrman and Kretschmer~\cite{aaronson2023qubit} investigate $\BQP$ when given various types of advice. There it is shown, among other results, that $\textsc{FBQP/qpoly} \neq \textsc{FBQP/poly}$ (not relative to an oracle!).

Isolation algorithms, i.e., algorithms reducing the number of solutions of a Boolean formula to 1, have been studied by Dell, Kabanets, van Melkebeek and Watanabe~\cite{dell2013valiant}. They show that, unless $\NP \subseteq \textsc{P/poly}$, no randomized polynomial time isolation algorithm with success probability better than $\frac{2}{3}$ can exist.

\paragraph{Discussion and open questions.} Our paper characterizes the quantum NP-query complexity of search-to-decision reductions and approximate counting (and additionally gives a white-box lower bound for classical approximate counting algorithms). For this, some of our results utilized a quantum reformulation of Stockmeyer's classical existential query model. Can quantum query lower bound methods like polynomial methods and adversary methods be adapted to apply directly to existential queries? An obstacle here is the fact that, for example, adversary methods often keep track of how a ``progress measure'' increases with each query made. Existential queries, however, seem to allow arbitrarily large jumps in such ``progress measures''.

Second, we prove that $\BPP^{\NP[o(n)]}$ and $\FBQP^{\NP[o(n)]}$ containing $\Ptime^{\NP}$- and $\FP^{\NP}$-complete problems, respectively, are consequences to very efficient approximate counting. It would be interesting to see if there are further consequences to these conclusions or if our results can be strengthened. For example, can our results be strengthened to a contradiction of a common complexity theoretic hypothesis such as the (strong) Exponential Time Hypothesis?
Third, what other tasks might a $\BQP^{\NP}$ or $\FBQP^{\NP}$ machine be good for?
Finally, we close with a simple-to-state, concrete open question, which captures much of the difficulty of working with $\BQP^{\NP}$: Let $\varphi$ be a SAT formula. Classically, it is easy to see that a solution to $\varphi$ cannot be produced by an $\FBPP$ machine with a single $\NP$ query, for this would imply $\BPP=\NP$. This is because one can simply plug each possible answer, $0$ or $1$, from the $\NP$ machine into the $\FBPP$ machine, and check if the string $x$ produced by the latter satisfies $\varphi$. Unfortunately, this approach completely breaks down for an $\FBQP$ machine making a single $\NP$ query, since the query may involve exponentially many inputs in superposition! Can one nevertheless show that $\FNP \subseteq \FBQP^{\NP[1]}$ implies $\NP \subseteq \BQP$?

\paragraph{Organization.} In Section \ref{sec:prelim} we cover definitions and prior results used in this article. In Section \ref{sec:upperbound}, we will give the proof of Theorem \ref{thm:upperbound}, the upper bound on search-to-decision reduction for $\NP$ with quantum access to the oracle. Following that, in Section \ref{sec:lowerbound} we give the proofs of our results in the existential query model, Theorem \ref{blackboxlowerbound} and Corollary \ref{cor:noapxcount}. Finally, we prove Theorem \ref{cor:noapxcount} in Section \ref{sec:whitebox}.

\section{Preliminaries}
\label{sec:prelim}
\subsection{Notation}
Throughout the paper we use $n$ for the number of variables of the formulae and $N = 2^n$ for the size of their truth tables. We write $\sol(\varphi)$ for the set $\{x \in \2^n \colon \varphi(x) \}$ of solutions of the Boolean formula $\varphi$.  If $x \in \2^*$ is a binary string we write $|x|$ for its Hamming weight.

\subsection{Function classes}
We briefly recall the definition the relevant function classes. Contrary to what the name suggests, function classes are actually classes of relations. We will require all relations $R$ in these classes to be \emph{p-bounded}. 
\begin{definition}
    A relation $R \subseteq \2^* \times \2^*$ is called p-bounded if there is some polynomial $p$ such that, for each $x$, if $\exists y . R(x,y)$, then $\exists z$ such that $\len(z) \le p(\len(x))$ and $R(x,z)$. Here $\len(x)$ is the length of the string $x$.
\end{definition}

\begin{definition}
    $\FP$ is the class of poly-time computable\footnote{With a poly-time computable relation we mean that there is a poly-time algorithm for evaluating $R(x,y)$ when given $x$ and $y$ as inputs.} p-bounded relations $R \subseteq \2^* \times \2^*$ such that there is a deterministic poly-time algorithm that on input $x$ does the following:
    \begin{enumerate}
        \item If $\exists y$ such that $R(x,y)$ then the algorithm outputs one such $y$
        \item If $\forall y. (x,y) \notin R$ then the algorithm outputs $\bot$.
    \end{enumerate}
\end{definition}

\begin{definition}
    The class $\FNP$ consists of all poly-time computable p-bounded relations $R \subseteq \2^* \times \2^*$. 
\end{definition}
To see the similarities with the definition of $\FP$ we note that this condition implies the existence of a poly-time non-deterministic algorithm for computing $R$ in the following sense. The algorithm takes as input a string $z$, and each branch of the (non-deterministic) computation outputs either a string $z$ or $\bot$ and satisfies the following properties:
    \begin{enumerate}
        \item If $\exists y$ such that $R(x,y)$, then all branches either output $\bot$ or a string $z$ such that $R(x,z)$ (not necessarily the same one). Furthermore, at least one branch does not output $\bot$.
        \item If $\forall y$ $(x,y) \notin R$, then all branches output $\bot$.
    \end{enumerate}
\noindent An $\FNP$-complete problem is FunctionSAT. It is the relation $R(\varphi, x)$ where $\varphi$ is a (binary encoding of) a Boolean formula and $(\varphi, x) \in R$ iff $x$ is a satisfying assignment of $\varphi$. FunctionSAT is $\FNP$-complete for the same reasons that $\SAT$ is $\NP$-complete.

For the definition of $\FBQP$ we follow Aaronson \cite{aaronson2014equivalence}.
\begin{definition}
    $\FBQP$ is the class of p-bounded relations $R \subseteq \2^* \times \2^*$ for which are computable by a quantum algorithm in the following sense. There exists a poly-time quantum algorithm that takes as input $x$ and $0^{1/\epsilon}$ and outputs a $y$. This is such that $R(x,y)$ with probability at least $1- \epsilon$ (assuming a $y$ with $R(x,y)$ exists). If $\forall y$ $(x,y) \notin R$ then it outputs $\bot$ with probability at least $1 - \epsilon$.
\end{definition}

\subsection{Witness isolation}
We consider algorithms that reduce an arbitrary formula to a uniquely satisfying one.
\begin{definition}[Isolation algorithm] 
    An isolation algorithm with success probability $p$ is a randomized algorithm that maps a Boolean formula $\varphi$ on $n$ variables to a formula $u$ on the same variables such that the formula $\varphi \wedge u$ has a unique solution with probability at least $p$.
\end{definition}
A celebrated result by Valiant and Vazirani states that isolation algorithms exist.
\begin{theorem}[Valiant Vazirani Theorem (\cite{valiant1985np})]
    There exists an isolation algorithm with success probability $\frac{1}{O(n)}$.
\end{theorem}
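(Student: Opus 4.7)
The plan is to implement witness isolation via random affine hashing over $\mathbb{F}_2$, following the usual Valiant--Vazirani blueprint. First I would sample an integer $k$ uniformly from $\{0,1,\dots,n+2\}$, then draw a uniformly random matrix $A \in \mathbb{F}_2^{k\times n}$ and a uniformly random vector $b \in \mathbb{F}_2^{k}$. I would take $u$ to be the Boolean formula that expresses the linear system $Ax + b = 0$, i.e.\ the conjunction over $i\in\{1,\dots,k\}$ of the parity constraints $\bigoplus_{j}A_{ij} x_j = b_i$. Each such parity is a Boolean function on at most $n$ variables and can be written as a polynomial-size formula (or CNF using auxiliary variables), so $u$ has size $\poly(n)$ and the whole construction runs in randomized polynomial time.

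The next step is the probabilistic analysis. Let $S=\sol(\varphi)$ (I would dispose of the unsatisfiable case trivially by outputting any fixed $u$) and let $Z=|\{x\in S : Ax+b=0\}|$. The map $x\mapsto Ax+b$ is a pairwise independent hash into $\{0,1\}^k$: for every $x$, $\Pr[Ax+b=0]=2^{-k}$, and for every $x\neq y$ these events are independent. Hence $\mathbb{E}[Z]=|S|/2^{k}=:\mu$ and, by pairwise independence, $\mathrm{Var}(Z)\le \mu$. From Bonferroni-type inclusion--exclusion and a union bound over pairs I would conclude
\[
\Pr[Z\ge 1] \;\ge\; \mu - \binom{|S|}{2}/2^{2k} \;\ge\; \mu-\mu^{2}, \qquad
\Pr[Z\ge 2] \;\le\; \binom{|S|}{2}/2^{2k} \;\le\; \mu^{2}/2,
\]
so that $\Pr[Z=1]\ge \mu - \tfrac{3}{2}\mu^{2}$, which is bounded below by a positive constant whenever $\mu$ is bounded away from $0$ and $1$.

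I would then pin down a ``good'' value of $k$: since $|S|\in\{1,\dots,2^n\}$, there is always an integer $k\in\{0,\dots,n+2\}$ such that $\mu=|S|/2^k\in[1/4,1/2]$, and on this range $\mu-\tfrac{3}{2}\mu^{2}\ge c$ for some absolute constant $c>0$. Because $k$ is chosen uniformly from a set of size $n+3$, the good value is hit with probability at least $1/(n+3)$. Combining the two events gives
\[
\Pr\!\bigl[\,\varphi\wedge u\text{ has a unique solution}\,\bigr] \;\ge\; \frac{c}{n+3} \;=\; \frac{1}{O(n)},
\]
as required.

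The main obstacle I expect is not the bookkeeping of the affine-hashing construction, which is standard, but rather the tightness of the second-moment window: one needs $\mu$ to sit in a regime where both $\Pr[Z\ge 1]$ is large and $\Pr[Z\ge 2]$ is small, and pairwise independence gives enough control for exactly this. Everything else --- expressing $Ax+b=0$ as a polynomial-size Boolean formula and choosing $k$ uniformly over an $O(n)$-size range to incur only a $1/O(n)$ loss --- is essentially bookkeeping.
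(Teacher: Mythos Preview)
Your proposal is correct and is essentially the same argument the paper has in mind: the paper does not prove the theorem directly but points to Lemma~\ref{VVlem} (pairwise-independent hashing gives a unique preimage with probability $\ge 1/8$ when $2^{k-2}\le|\sol(\varphi)|\le 2^{k-1}$), and the Valiant--Vazirani theorem then follows by guessing $k$ uniformly at random over an $O(n)$-size range. Your affine hash $x\mapsto Ax+b$, the window $\mu\in[1/4,1/2]$, and the Bonferroni/second-moment bound $\Pr[Z=1]\ge \mu-\tfrac{3}{2}\mu^{2}\ge 1/8$ are exactly a self-contained proof of Lemma~\ref{VVlem}, so the two routes coincide.
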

The main idea of Valiant and Vazirani is to use cleverly chosen pairwise independent hash function to reduce the size of the solutions space.
\begin{definition}(Pairwise independent hash functions (\cite[Definition~8.14]{arora2009computational}))
 A collection $\Hilb_{n,k}$ of functions from $\2^n$ to $\2^k$ is a collection of pairwise independent hash functions if for every $x \ne x' \in \2^n$ and $y, y' \in \2^k$ we have
 \begin{equation}
    \Pr[h(x) = y \wedge h(x') = y'] = 2^{-2k}
 \end{equation}
 where the probability is over $h$ being drawn uniformly at random from $\Hilb_{n,k}$.
\end{definition}

The proof of Valiant and Vazirani's theorem follows from the following lemma which will also be of independent interest for us.
\begin{lem}(\cite[Lemma~17.19]{arora2009computational})
    \label{VVlem}
    Let $\mathcal{H}_{n,k}$ be a collection of pairwise-independent hash functions from $\2^n$ to $\2^k$ and suppose that $\sol(\varphi) \subseteq \2^n$ is such that $2^{k-2} \le |\sol(\varphi)| \le 2^{k-1}$. Then
    \begin{equation}
        \Pr_{h \sim \mathcal{H}_{n,k}}\left[\left|\left\{ x\in \sol(\varphi) : h(x) = 0^k \right\} \right| = 1\right] \ge \frac{1}{8}.
    \end{equation}
\end{lem}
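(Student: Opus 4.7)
The plan is to carry out the standard second moment / Bonferroni analysis on the random count of hashed solutions. Let $S = \sol(\varphi)$ and write $s = |S|$, so by hypothesis $2^{k-2} \le s \le 2^{k-1}$, equivalently $p := s \cdot 2^{-k} \in [1/4,\,1/2]$. For each $x \in S$ introduce the indicator $Y_x = \mathbf{1}[h(x)=0^k]$ and let $N = \sum_{x \in S} Y_x$, so the event of interest is $\{N=1\}$. The first step is to use pairwise independence to record $\mathbb{E}[Y_x] = 2^{-k}$ and $\mathbb{E}[Y_x Y_{x'}] = 2^{-2k}$ for $x \neq x'$; this immediately gives $\mathbb{E}[N] = p$.

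Next, I would decompose $\Pr[N=1] = \Pr[N \ge 1] - \Pr[N \ge 2]$ and bound each piece separately. For the lower bound on $\Pr[N \ge 1]$ I would apply the Bonferroni inequality:
\begin{equation}
    \Pr[N \ge 1] \;=\; \Pr\!\left[\bigcup_{x \in S}\{Y_x = 1\}\right] \;\ge\; \sum_{x \in S} \Pr[Y_x=1] - \sum_{\{x,x'\} \subseteq S} \Pr[Y_x = Y_{x'} = 1] \;=\; s \cdot 2^{-k} - \binom{s}{2} 2^{-2k}.
\end{equation}
For $\Pr[N \ge 2]$, a standard union bound over unordered pairs yields $\Pr[N \ge 2] \le \binom{s}{2} 2^{-2k}$. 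Subtracting gives
\begin{equation}
    \Pr[N = 1] \;\ge\; s\cdot 2^{-k} - 2\binom{s}{2} 2^{-2k} \;=\; s\cdot 2^{-k} - s(s-1)\cdot 2^{-2k} \;\ge\; p(1-p),
\end{equation}
where in the last step I drop the favorable $+\,p\cdot 2^{-k}$ term obtained from $s(s-1) = s^2 - s$.

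The final step is to minimize $p(1-p)$ over $p \in [1/4,\,1/2]$. Since the parabola $p(1-p)$ is increasing on $[0,\,1/2]$, the minimum on this interval is attained at $p=1/4$, giving $p(1-p) \ge 3/16 \ge 1/8$, which is the claimed bound. There is no serious obstacle: the only place one has to be a bit careful is making sure the two $\binom{s}{2}2^{-2k}$ contributions (one from the Bonferroni lower bound on $\Pr[N\ge 1]$, one from the union bound on $\Pr[N\ge 2]$) are handled with matching constants, and that the resulting quadratic is bounded below across the full range $p \in [1/4, 1/2]$ rather than just at one endpoint.
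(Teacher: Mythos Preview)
Your proof is correct; the paper itself does not prove this lemma but simply cites it from Arora--Barak, and your Bonferroni/second-moment argument is exactly the standard textbook proof (indeed it yields the slightly stronger bound $3/16$).
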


It follows that witness isolation can be performed with constant success probability if the approximate size of the set of solutions is known.

We will also consider witness isolation algorithms with the added requirement that all solutions of $\varphi$ will be the unique solution of $\varphi \wedge u$ with approximately equal probability. We call such algorithms almost-uniform isolation algorithms.
\begin{definition}
    An $\epsilon$-almost-uniform isolation algorithm $A_{iso}$ with success probability $p$ takes as input a Boolean formula $\varphi$, and efficiently produces a Boolean formula $u$ on the same variables such that:
    \begin{itemize}
        \item $p$-Completeness: if $\varphi \in \SAT$, then, with probability at least $p$, $\varphi(x) \wedge u(x)$ has a unique satisfying assignment. In this case we say that the isolation succeeds.
        \item $\epsilon$-almost-uniformity: for all $x \in \sol(\varphi)$ we have: $$ \Pr\Big[\varphi(x) \wedge u(x) \Big| |\sol(\varphi(x) \wedge u(x))| = 1\Big] \le \frac{1+\epsilon}{|\sol(\varphi)|}. $$
    \end{itemize}
    Note that we do not require a lower bound on this probability.
\end{definition}

\subsection{Approximate counting}
Stockmeyer was the first to realize that an $\NP$ oracle can be used for approximate counting. It was shown by Chakraborty, Meel and Vardi that a logarithmic number of $\NP$ queries suffice: \cite{stockmeyer1983complexity,chakraborty2016algorithmic}.
\begin{theorem}[Approximate counting, \cite{stockmeyer1983complexity, chakraborty2016algorithmic}]
    \label{apxcount}
    Given a formula $\varphi$ on $n$ variables and parameters $\delta, \epsilon > 0$, there exists a randomized poly-time algorithm, making $O\left(\frac{\log n \log(1/\delta)}{\epsilon^2}\right)$ queries to an $\NP$-oracle, that outputs a value $c$ such that:
    \begin{equation}
        \Pr\left(\frac{|\sol(\varphi)|}{1 + \epsilon} \le c \le (1 + \epsilon)|\sol(\varphi)|\right) \ge 1-\delta.
    \end{equation}
\end{theorem}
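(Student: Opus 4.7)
The plan is to extend the Valiant--Vazirani hashing idea from witness isolation (Lemma~\ref{VVlem}) to counting. The key observation is that for $h$ drawn from a pairwise-independent family $\mathcal{H}_{n,k}$, the auxiliary formula $\psi_{h,k}(x) := \varphi(x) \wedge (h(x) = 0^k)$ is satisfiable with a probability closely tied to the ratio $|\sol(\varphi)|/2^k$: if $2^k$ substantially exceeds $|\sol(\varphi)|$, then no solution survives whp by a union bound, while if $2^k$ is comparable to $|\sol(\varphi)|$, then by Chebyshev's inequality at least one solution survives with constant probability. Crucially, testing satisfiability of $\psi_{h,k}$ (which is itself polynomial-size since $h$ is an affine map over $\mathbb{F}_2$) costs just a single $\NP$ query.

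The first step is to obtain a crude constant-factor approximation. For each candidate $k \in \{0,1,\ldots,n+1\}$, I would sample $h_k \sim \mathcal{H}_{n,k}$ and query whether $\psi_{h_k,k}$ is satisfiable; letting $\hat{k}$ be the largest $k$ that answers ``yes'', the value $2^{\hat{k}}$ is within a constant multiplicative factor of $|\sol(\varphi)|$ with constant probability. Because the boundary $k$ is unique up to $O(1)$ slack, this threshold can be located by binary search, cutting the per-trial query cost from $n$ to $O(\log n)$.

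To sharpen a constant-factor estimate into a $(1+\epsilon)$-approximation, the standard device is to apply the same scheme to the tensored formula $\varphi^{\otimes t} := \varphi(x^{(1)}) \wedge \cdots \wedge \varphi(x^{(t)})$ on disjoint copies of variable sets, with $t = \Theta(1/\epsilon)$. Since $|\sol(\varphi^{\otimes t})| = |\sol(\varphi)|^t$, a factor-$2$ estimate of the tensored count yields a factor-$2^{1/t} \le 1+\epsilon$ estimate of $|\sol(\varphi)|$ after extracting a $t$-th root. The tensored formula has $\poly(n)$ variables, so the inner binary search still costs $O(\log n)$ queries. Finally, independent repetition $O(\log(1/\delta))$ times followed by taking the median boosts the success probability to $1-\delta$ via a Chernoff bound, giving the total query count $O\!\left(\frac{\log n \cdot \log(1/\delta)}{\epsilon^2}\right)$.

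The main obstacle will be the quantitative concentration analysis near the threshold. Pairwise independence yields only Chebyshev-level control, so the precise constants in Lemma~\ref{VVlem} must be tracked carefully across all scales of $k$ relative to $\log_2 |\sol(\varphi)|$; this is where the $1/\epsilon^2$ dependence is absorbed (either via tensoring as above, or via a direct Chebyshev argument on the count of surviving solutions). The Chakraborty--Meel--Vardi improvement over Stockmeyer's original $O(\log n \log \log n)$ bound is the delicate part: it amortises the hash-query tests across scales, in effect reusing a single random hash family for all candidate $k$ and bounding the aggregate query cost of binary search directly, rather than redoing an independent search for each $\epsilon$-refinement step.
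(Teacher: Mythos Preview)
The paper does not prove Theorem~\ref{apxcount}. It is stated in the preliminaries as a known result, with citations to \cite{stockmeyer1983complexity,chakraborty2016algorithmic}, and is used as a black box thereafter. There is therefore no ``paper's own proof'' to compare your proposal against.

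As for your sketch on its own merits: the hashing-and-threshold paradigm is correct in outline, but the accounting for the $1/\epsilon^2$ factor is off. Tensoring $t=\Theta(1/\epsilon)$ copies produces a formula on $\Theta(n/\epsilon)$ variables, so the binary search costs $O(\log(n/\epsilon))$ queries per trial --- this does \emph{not} produce a $1/\epsilon^2$ multiplicative factor in the query count. In the Chakraborty--Meel--Vardi algorithm the $1/\epsilon^2$ arises from a different mechanism: at each level of the binary search one invokes a \emph{bounded model counting} subroutine that checks whether the hashed cell has at most $\mathrm{pivot}=\Theta(1/\epsilon^2)$ solutions, and this check costs $\Theta(\mathrm{pivot})$ $\NP$ queries (enumerate-and-block). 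Combined with $O(\log n)$ binary-search levels and $O(\log(1/\delta))$ median-amplification repetitions, this yields the stated bound. Your parenthetical ``direct Chebyshev argument on the count of surviving solutions'' gestures toward this, but the proposal as written conflates two distinct mechanisms for handling $\epsilon$ and would not recover the claimed query complexity via tensoring alone.
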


\subsection{Query complexity}
In query complexity one studies how often an algorithm needs to query an input string $x \in \2^N$ in order to compute some function of $x$. In this paper we will consider three different types of oracles: standard oracles, succinct existential oracles and non-succinct existential oracles.

With the \emph{standard oracle model} we will refer to the oracle model that is usually used in quantum query complexity. In this model, the queries give access to a string $x \in \2^N$ using the following query gate:
\begin{equation}
    O_x \colon \ket{i} \mapsto (-1)^{x_i} \ket{i}.
\end{equation}
We will call an application of the oracle \emph{classical} if it is applied to a computational basis state.

In this paper, the string $x$ will usually be the truth table of a hidden formula $\varphi$ (i.e. $x_i = 1 \iff \varphi(i)$). Then, querying the oracle on index $i$ corresponds to computing $\varphi(i)$.

Standard queries do not satisfactorily capture the power of $\NP$ queries. For example, it only takes one $\NP$ query to determine if a formula is satisfiable (i.e. if its truth table is not all 0s), but it is well known that determining if there is an $i$ with $x_i = 1$ takes $\Theta(\sqrt{N})$ standard queries (\cite{grover1996fast,bennett1997strengths}). Therefore, we will also consider other query models that better capture the power of $\NP$ queries.

The first of these query models we will call the \emph{succinct existential query model}. Here, the oracle hides a formula $\varphi$. A query consists of a different poly-size formula $\psi$ and the result to this query will be whether or not $\varphi \wedge \psi$ is satisfiable. Specifically, the oracle can be queried using the query gate
\begin{equation}
    O^{\NP}_\varphi \colon \ket{\psi} \mapsto (-1)^{\sat(\varphi \wedge \psi)}\ket{\psi}
\end{equation}
where $\sat(\varphi \wedge \psi)$ is $1$ if $\varphi \wedge \psi$ is satisfiable and $0$ otherwise.
The succinct $\exists$-query model captures the power of an actual $\NP$ oracle well. It is strong enough to be used for the most common, if not all, well-known algorithms computing properties of Boolean formulae, such as finding the lexicographically-least solution of the hidden formula $\varphi$ (a $\Ptime^{\NP}$-complete problem) and approximately count the number of solutions of $\varphi$ (e.g. with \cite{chakraborty2016algorithmic}).

We will also consider a non-succinct version of the existential query model, which we will also simply call the \emph{existential query model}. Essentially this model is a reformulation of a model originally introduced by Stockmeyer \cite{stockmeyer1983complexity}, but restated in a manner closer to the standard query model. Existential queries ($\exists$-queries) are of the form
\begin{equation}
    O_x^{\exists} \colon \ket{z} \mapsto (-1)^{\overlap(x,z)}\ket{z},
\end{equation}
where $\overlap \colon \2^N \times \2^N \to \2$ is the function
\begin{equation}
    \overlap(x, z) = \begin{cases}
        1 & \text{if } \exists i, x_i = z_i = 1 \\
        0 & \text{otherwise}.
    \end{cases}
\end{equation}
Again, the hidden string will usually be the truth table of a formula $\varphi$. At first glance, this query model may look rather useless. The query register has size exponential in $n$ (the number of variables of $\varphi$). Therefore, even performing a single query will take exponential time. However, \emph{if} one is only concerned with the number of queries, and not with other resources such as time and space, then non-succinct existential queries are more powerful than succinct ones. Instead of making a succinct $\exists$-query with formula $\psi$, an algorithm can make a non-succinct existential query with $z$ the truth table of $\psi$. Furthermore, not all truth tables correspond to poly-size formulae. In this paper we will prove lower bounds on the number of existential queries an algorithm needs to make. The result will allow the algorithms unbounded time and space and hence these lower bounds will in particular hold for efficient algorithms making succinct $\exists$-queries.

We are interested in the number of these queries needed to find a solution of $\varphi$. In the existential query model this corresponds to solving the search problem on the truth table $x$ of $\phi$, that is, outputting an index $i$ such that $x_i = 1$. It should be noted that, unlike in the case of standard queries, the existential query complexity of this function search problem is not necessarily the same as that of the decision search problem (i.e. determining if there is such an index). For example, an information theoretic argument shows that classically $\Theta(n)$ existential queries are needed for function search, but a single existential query with $z = 1^N$ solves decision search.\footnote{The string $1^N$ has overlap with any non-zero string. Hence it has overlap with the truth table of $\varphi$ iff $\varphi$ is satisfiable.}

We will also be interested in a slight variation of the search problem which we call the \emph{index sampling} problem. Here the task is to sample according to the uniform distribution on the support of $x$. Formally, we define it as
\begin{definition}[Index sampling]
    An algorithm solves the index sampling problem if it, for all $x \in \2^N \backslash \{0^N\}$, outputs $s \in [N] \cup \{\bot\}$ such that:
    \begin{itemize}
        \item for all $i \in [N]$ with $x_i = 1$, $\Pr[s = i | s \neq \bot] = \frac{1}{|x|}$,
        \item there is a constant $c$ such that $\Pr[s \neq \bot] \ge c$.
    \end{itemize}

\end{definition}

\section{Quantum algorithm for search-to-decision reduction}
\label{sec:upperbound}
We are now ready to prove our results. For pedagogical reasons we start with the proof of the upper bound (Theorem \ref{thm:upperbound}) before proving our main results: the lower bounds in Theorem \ref{blackboxlowerbound} and Theorem \ref{cor:noapxcount}.
\begin{reptheorem}{thm:upperbound}
    $\FNP \subseteq \FBQP^{\NP[\log]}$. Furthermore, all queries made to the oracle are of the form $\varphi \wedge \chi$ where $\varphi$ is the input formula and $\chi$ some other formula.
\end{reptheorem} 

\begin{proof}
    We will show that there exists an $\FBQP^{\NP[\log]}$ algorithm that, when given a $\SAT$ instance $\varphi$, outputs a satisfying assignment $x\in \2^n$ of $\varphi$ if one exists, and outputs ``no solution'' otherwise. The algorithm succeeds with constant probability. This success probability can be boosted by running the algorithm a constant number of times, checking for each output if it is indeed a satisfying assignment and then outputting one that is. Hence, the success probability can be taken to be any arbitrary constant.

    The existence of a satisfying assignment can be checked using a single query to the $\NP$-oracle. Therefore, we will restrict our attention to the case where a satisfying assignment exists. We will proceed in two steps. First, we show how the satisfying assignment of a formula with \emph{exactly} one satisfying assignment can be found with only a single query to the $\NP$-oracle using the Bernstein Vazirani algorithm. Then we show how we can use $O(\log n)$ queries to reduce any formula to a uniquely satisfying one with constant probability.

    \begin{lem}[``Bernstein-Vazirani (BV) trick'' \cite{irani2021quantum}]
        \label{iranilemma}
        Let $\varphi$ be a formula with exactly one solution. There exists a $\BQP$ algorithm that makes a single query to an $\NP$-oracle and finds this unique solution with probability 1.
    \end{lem}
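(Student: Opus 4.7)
The plan is to apply the Bernstein--Vazirani algorithm, using the succinct existential $\NP$ oracle as a phase oracle for parity functions. Let $s \in \2^n$ be the unique satisfying assignment of $\varphi$. For each $z \in \2^n$, define the parity formula $\psi_z(x) = \bigoplus_{i \,:\, z_i = 1} x_i$, which admits a poly-size description of length $O(n)$. Because $s$ is the \emph{only} assignment satisfying $\varphi$, we have the key identity
\begin{equation}
    \sat(\varphi \wedge \psi_z) \;=\; \psi_z(s) \;=\; z \cdot s \pmod 2,
\end{equation}
so a single $\NP$ query on $\psi_z$ returns exactly the linear form $z \cdot s$ encoded as a phase, which is precisely the oracle Bernstein--Vazirani is designed to invert.

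Concretely, the algorithm would: (i) prepare the uniform superposition $\tfrac{1}{\sqrt{2^n}} \sum_z \ket{z}$ on an $n$-qubit ``query-index'' register; (ii) reversibly compute the description of $\psi_z$ into a fresh ancilla register, producing $\tfrac{1}{\sqrt{2^n}} \sum_z \ket{z}\ket{\psi_z}$; (iii) apply the $\NP$ query gate $O^{\NP}_\varphi$ to the ancilla register to obtain
\begin{equation}
    \frac{1}{\sqrt{2^n}} \sum_{z \in \2^n} (-1)^{z \cdot s}\, \ket{z}\ket{\psi_z};
\end{equation}
(iv) uncompute the ancilla register; and (v) apply Hadamards to the first register and measure. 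The standard BV analysis then produces the outcome $s$ with probability $1$. Since only step (iii) invokes the oracle, exactly one $\NP$ query is used, and the remaining gates are polynomial-time.

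The only real subtlety, and the place where the hypothesis is essential, is the identity $\sat(\varphi \wedge \psi_z) = \psi_z(s)$. If $\varphi$ had a second satisfying assignment $s' \neq s$, then $\varphi \wedge \psi_z$ could be satisfiable via $s'$ even when $z \cdot s = 0$, which would corrupt the phases and destroy the BV interference pattern. Uniqueness forces every $\NP$ query to depend on $s$ alone, yielding a clean linear phase. This is precisely why the main theorem must first invoke a Valiant--Vazirani style isolation step before applying this lemma; the remaining tasks (reversibly encoding the $O(n)$-bit description of a parity formula, and applying $n$ Hadamards) are entirely routine and are not expected to present any obstacle.
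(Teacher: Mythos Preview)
Your proposal is correct and follows essentially the same approach as the paper: both use the Bernstein--Vazirani algorithm with the observation that, under uniqueness of $s$, the $\NP$ oracle query ``is $\varphi(x)\wedge (x\cdot z = 1)$ satisfiable?'' returns exactly $z\cdot s$, giving the linear phase oracle BV inverts with a single query. Your version is slightly more explicit about the ancilla register and uncomputation, but the argument is the same.
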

    \begin{subproof}[Proof of BV trick]
        Let the unique solution of $\varphi$ be denoted by $s$ and consider the formula 
        \begin{equation}
            \psi_a \coloneqq \varphi(x) \wedge (x\cdot a = 1),
        \end{equation} 
        where $a\in \2^n$ and $x\cdot a$ denotes the inner product of the two binary strings $x$ and $a$ modulo $2$. (Note that $\psi_a$ is of the form $\varphi \wedge \chi$.) We now have that $\psi_a$ is satisfiable (i.e. there is a $y$ such that $\varphi(y) \wedge y\cdot a = 1$) if and only if\footnote{Note this does not imply $a = s$.} $a\cdot s = 1$ because $s$ is, by assumption, the only solution of $\varphi$. Now we run the Bernstein-Vazirani algorithm, where to evaluate $a\cdot s$ we ask the $\NP$ machine whether $\psi_a$ is true. That is, we start with the state $\ket{0^n}$ and apply $H^{\otimes n}$ to get the uniform superposition on $n$ qubits. The next step is to query the oracle on input $\psi_a$ where $a$ is in a uniform superposition:
        \begin{equation}
            \frac{1}{\sqrt{2^n}}\sum_{a \in \2^n} \ket{a} \mapsto \frac{1}{\sqrt{2^n}} \sum_{a \in \2^n} (-1)^{a\cdot s}\ket{a}.
        \end{equation}
        Now another application of $H^{\otimes n}$ gives:
        \begin{align}
            \frac{1}{2^n}\sum_{a,y \in \2^n} (-1)^{a\cdot s + a\cdot y} \ket{y} = \frac{1}{2^n}\sum_{a,y \in \2^n} (-1)^{a\cdot(s \oplus y)} \ket{y} = \ket{s}.
        \end{align}
    Hence measuring the final state in the computational basis gives the unique satisfying assignment $s$ of $\varphi$.
    \end{subproof}
    To deal with an arbitrary number of solutions, we first use Theorem \ref{apxcount} and $O(\log n)$ queries to the $\NP$ oracle to find $k$ such that $2^{k-2} \le |\sol(\varphi)| < 2^{k-1}$. (All queries made by the approximate counting algorithm in \cite{chakraborty2016algorithmic} are of the form $\varphi \wedge \chi$.) Then, we invoke Lemma \ref{VVlem} to obtain $u(x) \coloneqq h(x) = 0^k$ such that $\varphi \wedge u$ has a unique solution with probability $>\frac{1}{8}$. Applying the BV trick to $\varphi \wedge u$ then completes the proof. Furthermore, all queries made were of the claimed form.
\end{proof}

\section{Lower bound for existential query complexity of search}
\label{sec:lowerbound}
We will prove that all quantum algorithms for the search problem need $\Omega(\log n)$ existential queries, even if we allow the algorithm to additionally make $\poly(n)$ classical standard queries. To do so, we will reduce binary search to this problem in order to use Ambainis' lower bound for binary search \cite{ambainis1999better}. A binary search problem consists of a monotonic binary string $x = 00\dots 01 \dots 1$ and the task is to find the index of the first $1$.

\begin{reptheorem}{blackboxlowerbound}[Restated]
    Any quantum algorithm with existential query access to $x\in \2^N$ needs to make $\Omega(\log\log N) = \Omega(\log n)$ existential queries to find an $i$ such that $x_i = 1$. This remains true even if the algorithm is allowed to make an additional $\poly(n)$ classical standard queries.
\end{reptheorem}
The proof will follow from the following lemma.
\begin{lem}
\label{lowerboundlem}
    Consider the following statements:
    \begin{enumerate}
        \item There exists a quantum algorithm for search on strings of size $N = 2^n$ that makes $q$ $\exists$-queries and $\poly(n)$ \emph{classical} standard queries and succeeds with constant probability.
        \item There exists a quantum algorithm for search on strings of size $N$ using $q + O(\log \log n)$ $\exists$-queries and no additional standard queries which succeeds with constant probability.
        \item There exists a quantum algorithm for index sampling on strings of size $N$ using $q + O(\log \log n)$ $\exists$-queries and no additional standard queries which succeeds with constant probability.
        \item There exists a quantum algorithm for binary search on strings of size $n$ using $q + O(\log \log n)$ $\exists$-queries and no additional standard queries. The algorithm succeeds with constant probability.
        \item There exists a quantum algorithm for binary search on strings of size $n$ using $q + O(\log \log n)$ standard quantum queries and no $\exists$-queries. The algorithm succeeds with constant probability.
    \end{enumerate}
    Then, $1 \implies 2 \implies 3 \implies 4 \implies 5$.
\end{lem}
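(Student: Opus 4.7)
The plan is to chain the four reductions, keeping the number of $\exists$-queries essentially unchanged until the very end, so that a type-1 algorithm eventually yields a binary-search algorithm using only standard queries, at which point Ambainis' $\Omega(\log n)$ lower bound on ordered search~\cite{ambainis1999better} forces $q = \Omega(\log n)$. The main conceptual work lies in $2 \Rightarrow 3$ (symmetrization to produce a uniform sampler) and $3 \Rightarrow 4$ (the geometric lift from binary search into a search space whose sampled indices encode the answer); the other two steps are direct simulations.

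For $1 \Rightarrow 2$, a classical standard query at index $i$ can already be simulated by a classical $\exists$-query on the singleton $z = e_i$, so the $\poly(n)$ classical standard queries absorb one-for-one into additional $\exists$-queries. Tightening this naive overhead down to the claimed $O(\log\log n)$ is what I expect to be the main obstacle of the lemma; one likely route is to observe that only a small number of the bits obtained from the standard queries can actually influence the algorithm's eventual output index, so that they can be batched into a handful of composite $\exists$-queries rather than simulated one by one.

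For $2 \Rightarrow 3$, I would use random-permutation symmetrization. Sample $\pi \in S_N$ uniformly using internal randomness and simulate the search algorithm on $\pi \cdot x$ via the identity $\overlap(\pi \cdot x, z) = \overlap(x, \pi^{-1} z)$, implemented by sandwiching a single $\exists$-query on $x$ between free unitaries that apply $\pi^{\pm 1}$ to the query register; on output $i^\star$, return $\pi^{-1}(i^\star)$. For any $j, j' \in \supp(x)$, the transposition $\sigma = (j\; j')$ fixes $x$, so the change of variables $\pi \mapsto \pi\sigma$ identifies the probability of outputting $j$ with that of outputting $j'$, yielding the uniform distribution on $\supp(x)$ conditional on success. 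No additional queries are incurred.

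For $3 \Rightarrow 4$, given a monotonic input $x \in \2^n$, I construct the lifted string $y \in \2^N$ whose first $2^{n-1}$ bits equal $x_1$, next $2^{n-2}$ equal $x_2$, and so on. If $x_k = 0$ and $x_{k+1} = 1$, the block corresponding to $x_{k+1}$ contains $2^{n-k-1}$ of the $|y| = 2^{n-k} - 1$ ones of $y$, strictly more than half; running the index sampler on $y$ and returning the block index of the sampled coordinate therefore outputs $k+1$ with probability $> 1/2$. Each existential query on $y$ with input $z \in \2^N$ is simulated by one on $x$ with input $w \in \2^n$ defined by $w_j = 1$ iff $z$ has a $1$ anywhere in block $j$, using $\overlap(y, z) = \overlap(x, w)$, where $z \mapsto w$ is a free unitary. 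Finally, for $4 \Rightarrow 5$, monotonicity of $x$ gives $\overlap(x, w) = x_{i^\star}$ for $i^\star \coloneqq \max\{i : w_i = 1\}$ (or $0$ if $w = 0^n$), so each $\exists$-query reduces to a single standard query at index $i^\star$, with $i^\star$ computed from $w$ by a free unitary; composing all four reductions then feeds directly into Ambainis' bound.
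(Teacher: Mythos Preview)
Your treatment of $2\Rightarrow 3$, $3\Rightarrow 4$, and $4\Rightarrow 5$ matches the paper's argument (with the roles of $x$ and $y$ swapped in $3\Rightarrow 4$); the transposition argument you give for uniformity in $2\Rightarrow 3$ is in fact a slightly cleaner variant of the paper's explicit probability computation, and the block-lift construction in $3\Rightarrow 4$ is identical to the paper's.

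The genuine gap is $1\Rightarrow 2$, which you correctly flag as the crux but do not resolve. Your suggested route---that only a few of the classical answer bits ``actually influence'' the output, so they can be batched---is not the paper's approach, and it is not clear it can be made to work: the algorithm may adaptively branch on every one of its $\poly(n)$ classical answers, and the output index has $n$ bits, not $O(\log\log n)$. The paper instead uses a \emph{guess-and-verify} trick. Run the original algorithm while pretending every classical standard query returns $0$, recording the set $A$ of indices that would have been queried (so $|A|=\poly(n)$). At the end, spend a single $\exists$-query on the indicator string of $A$ to test whether all guesses were correct. If the answer is $0$, the simulated run is faithful and its output is a valid solution. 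If the answer is $1$, then $x$ has a $1$ somewhere inside the known set $A$, and one now solves search \emph{restricted to $A$} using the upper-bound machinery of Theorem~\ref{thm:upperbound} (approximate counting followed by the Bernstein--Vazirani trick). Because $|A|=\poly(n)$, approximate counting within $A$ costs only $O(\log\log|A|)=O(\log\log n)$ $\exists$-queries. This is precisely where the $O(\log\log n)$ overhead in statement~2 originates, and it is the idea your proposal is missing.
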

\noindent It is worth noting that all success probabilities can be boosted to be bigger than any constant $c<1$.
\begin{proof}
$1\implies 2$:
consider an algorithm for search using $q$ $\exists$-queries and $\text{poly}(n)$ classical standard queries. We will modify the algorithm to get rid of the standard queries. Our new modified algorithm will act exactly the same as the original algorithm, except it does not actually perform the classical standard queries. Instead, it assumes the answer to those queries is $0$ and keeps track of the positions that should have been queried in a set $A$. At the end of the algorithm, it checks if its assumptions were correct using an $\exists$-query. That is, it performs an $\exists$-query with string $z$ defined by $z_i = 1$ iff the $i$-th index should have been queried by a classical standard query at some point.

Now there are now two options. Either the result of this $\exists$-query is 0, in which case the assumptions that all indices queried by classical standard queries were $0$ is correct, and hence the modified algorithm and the original algorithm coincide.
Alternatively, the result of the $\exists$-query is $1$. Then at least one of the assumptions was wrong. But now the algorithm has determined that the set $A \subseteq [N]$ contains some $i \in A$ with $x_i = 1$. Furthermore, $A$ contains at most $\poly(n)$ elements since only $\poly(n)$ classical queries were made. The search algorithm from Theorem \ref{thm:upperbound} makes only (succinct) existential queries, so we can now use it to search, within $A$, for an $i$ with $x_i = 1$.\footnote{We can add $\wedge x \in A$ to all formulae to restrict the search to within $A$.} Because we know that $|A| = \poly(n)$, approximately counting the number of solutions \emph{within $A$} will take $O(\log\log |A|) =  O(\log \log n)$ $\exists$-queries using the algorithm from \cite{chakraborty2016algorithmic}.\footnote{Essentially, the algorithm from \cite{chakraborty2016algorithmic} uses binary search to find $k\in[n]$ such that $2^{k-1} < |\sol(\varphi)| < 2^k$. Because $|A| = O(\poly(n))$, it is already known that $|\sol(\varphi(x) \wedge x \in A)| \le O(\poly(n)) = 2^{O(\log n)}$. Therefore, the binary search is sped up.} Therefore, finding a solution within $A$ given that there is one will take $O(\log \log n)$ existential queries.

$2 \implies 3$: consider a permutation $\sigma \in S_N$ mapping $[N]$ to itself drawn uniformly at random. With slight abuse of notation we define $\sigma(x)$ by $\sigma(x)_i = x_{\sigma(i)}$. Our algorithm for index sampling will apply the algorithm for search to $\sigma(x)$, undo the permutation, check if it is indeed a solution and output the result if it is, and abort (i.e. output $\bot$) if it is not. The probability of aborting is exactly the failure probability of the search algorithm. In the following we condition on the sampling algorithm not aborting and assume $x \neq 0^N$ (this case can be checked with 1 $\exists$-query).

Denote by $\search(x)$ the output of the search algorithm on input $x$. Consider the probability $p(i) = \Pr[\sigma^{-1}(\search(\sigma(x))) = i] = \Pr[\search(\sigma(x)) = \sigma(i)]$. We will show that $p(i) = \frac{1}{|x|}$ if $x_i = 1$ and $0$ otherwise. Note that there are two sources of randomness: the random choice of $\sigma$ and potentially random behavior of the search algorithm. We can write:
\begin{align}
    p(i) &= \sum_{y \in \2^N} \sum_{k \in [N]} \Pr[\sigma(x) = y \wedge \sigma(i) = k \wedge \search(y) = k] \\
    &= \sum_{y,k} \Pr[\sigma(x) = y]\cdot\Pr[\sigma(i) = k | \sigma(x) = y]\cdot\Pr[\search(y) = k | \sigma(x) = y \wedge \sigma(i) = k].
\end{align}
Because the algorithm for search does not depend on $\sigma$, we have that
\begin{align}
    p_y(k) \coloneqq \Pr[\search(y) = k | \sigma(x) = y \wedge \sigma(i) = k] = \Pr[\search(y) = k].
\end{align}
The $p_y(k)$ are unknown, but because the algorithm solves the search problem we do know that $\Pr[\search(y) = k | y_k = 0] = 0$ and $\sum_{k: y_k = 1} p_y(k) = 1$. Furthermore, we have
\begin{equation}
    \Pr[\sigma(x) = y] = \begin{cases} 1/{N\choose{|x|}} &\text{ if } |x| = |y|  \\ 0 &\text{ if } |x| \ne |y| \end{cases}
\end{equation}
and
\begin{equation}
    \Pr[\sigma(i) = k | \sigma(x) = y] = \begin{cases}
        \frac{1}{|x|} & \text{ if } x_i = y_k = 1 \\
        \frac{1}{N - |x|} & \text{ if } x_i = y_k = 0\\
        0 & \text{ if } x_i \ne y_k.
    \end{cases}
\end{equation}
We can now put everything together to get
\begin{align}
    p(i) &= \sum_{y : |x| = |y|} \frac{1}{{N\choose{|x|}}} \sum_{k : y_k = 1} \frac{1}{|x|} p_y(k) = \frac{1}{|x|}
\end{align}
if $x_i = 1$. On the other hand, if $x_i = 0$, we have that $p_y(k) = 0$ if $y_k = 0$ and that $\Pr[\sigma(i) = k | \sigma(x) = y] = 0$ if $y_k = 1$. Therefore, we have $p(i) = 0$ if $x_i = 0$.

$3 \implies 4$: let $y \in \2^{n}$ be a binary search instance. That is, $y = 00\dots011\dots1$ is monotonically increasing. For binary search, we want to find the smallest index $i$ such that $y_i = 1$. We will now define another (exponentially longer) binary string $x \in \2^N$ by $x_1 = x_2 = \dots = x_{N/2} = y_1$, $x_{N/2 + 1} = \dots = x_{3N/4} = y_2$ and so on, ending with $x_{N-2} = x_{N-1} = y_{\log N - 1}$ and $x_n = y_{\log N}$, i.e.
\begin{equation}
    x = \underbrace{y_1 y_1 \dots y_1}_{\frac{N}{2} \text{ times}} \underbrace{y_2 y_2 \dots y_2}_{\frac{N}{4} \text{ times}} \dots \underbrace{y_i y_i \dots y_i}_{\frac{N}{2^i} \text{ times}} \dots y_{\log N - 1} y_{\log N -1} y_{\log N}.
\end{equation}
If we sample an $i$ with $x_i = 1$ uniformly at random, it will correspond to the smallest $j$ with $y_j = 1$ with probability $>\frac{1}{2}$. This is because by construction each $y_j$ appears more in $x$ than all $y_k$ for $k>j$ together. Furthermore, each query to $y$ (standard or existential) can be simulated by a single query of the same kind to $x$. Therefore, statement $3.$ allows us to solve binary search with constant success probability and $q + O(\log\log n)$ $\exists$-queries.

\begin{figure}[h!]
\begin{mdframed}[align = center, linecolor = black!100, linewidth = .3mm, skipbelow = 0cm, skipabove=0mm]
\begin{center}
\begin{tikzpicture}[
spacednode/.style={circle, very thick, minimum size=3mm},
]
\node      (y1)                              {$y_1$};
\node[spacednode]      (top)       [left=of y1]           {};
\node      (y2)            [right=5mm of y1] {$y_2$};
\node      (yn-1)            [right= 14mm of y2] {$y_{n-1}$};
\node      (yn)            [right=5mm of yn-1] {$y_{n}$};
\node      (xN/2)            [below=14mm of y1] {$x_{N/2}$};
\node      (x1)            [left=14mm of xN/2] {$x_{1}$};
\node      (xN/2+1)            [right=5mm of xN/2] {$x_{N/2 + 1}$};
\node      (x3N/4)            [right=14mm of xN/2+1] {$x_{3N/4}$};
\node      (xN-2)            [right=14mm of x3N/4] {$x_{N-2}$};
\node      (xN-1)            [right=5mm of xN-2] {$x_{N-1}$};
\node      (xN)            [right=5mm of xN-1] {$x_N$};
\node       (spacer)        [above=7.5mm of xN-1]      {};
\node       (text1)         [right=14mm of spacer]       {\textbf{Duplicate}};
\node       (text2)         [below=14mm of text1]    {\textbf{Permute}};
\node      (sx1)            [below=14mm of x1] {$\sigma(x)_1$};
\node      (sxN)            [below=14mm of xN] {$\sigma(x)_N$};
\node      (empty1)            [below=14mm of xN/2] {};
\node      (empty2)            [below=14mm of xN/2+1] {};
\node      (empty3)            [below=14mm of x3N/4] {};
\node      (empty4)            [below=14mm of xN-2] {};
\node      (empty5)            [below=14mm of xN-1] {};

\draw[loosely dotted, very thick] (y2.east) -- (yn-1.west);
\draw[-{Latex},very thin] (y1.south) -- (x1.north);
\draw[-{Latex}] (y1.south) -- (xN/2.north);
\draw[-{Latex}] (y2.south) -- (xN/2+1.north);
\draw[-{Latex}] (y2.south) -- (x3N/4.north west);
\draw[-{Latex}] (yn-1.south) -- (xN-2.north west);
\draw[-{Latex}] (yn-1.south) -- (xN-1.north west);
\draw[-{Latex}] (yn.south) -- (xN.north west);
\draw[loosely dotted, very thick] (x1.east) -- (xN/2.west);
\draw[loosely dotted, very thick] (xN/2+1.east) -- (x3N/4.west);
\draw[loosely dotted, very thick] (x3N/4.east) -- (xN-2.west);
\draw[-{Latex}] (x1.south east) -- (empty4.north);
\draw[-{Latex}] (xN/2.south) -- (empty2.north);
\draw[-{Latex}] (xN/2+1.south) -- (sx1.north east);
\draw[-{Latex}] (x3N/4.south) -- (sxN.north west);
\draw[-{Latex}] (xN-2.south) -- (empty5.north);
\draw[-{Latex}] (xN-1.south) -- (empty1.north);
\draw[-{Latex}] (xN.south west) -- (empty3.north);
\draw[loosely dotted, very thick] (sx1.east) -- (sxN.west);

\end{tikzpicture}
\end{center}
\end{mdframed}
\vspace*{-5mm}
\caption*{Modification of the binary-search oracle $y$. Queries to $\sigma(x)$ are made by ``going up the arrows''.}
\end{figure}
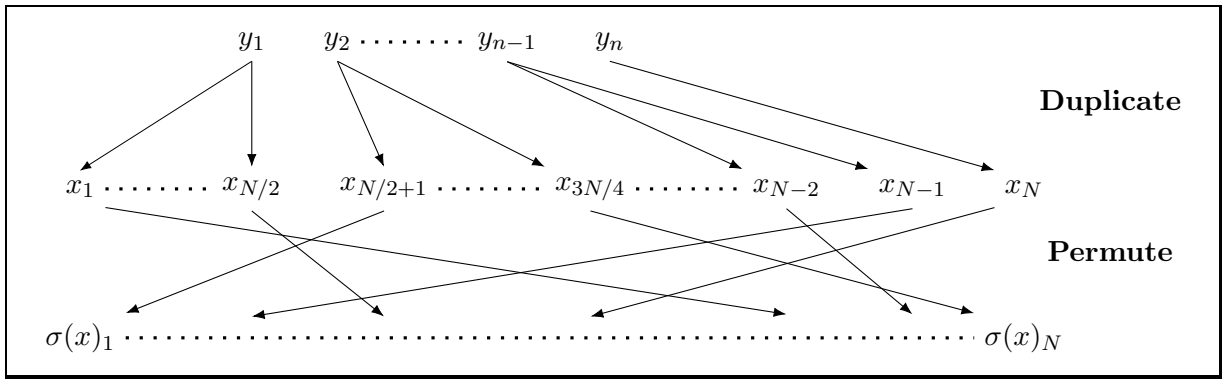

$4 \implies 5$: because the strings considered in a binary search problem are monotonically increasing we can simulate an $\exists$-query on such a string using only a single standard query. Instead of an $\exists$-query with $z$, we find the largest $i$ such that $z_i = 1$ and use a standard query to query the $i$-th bit. The claimed implication follows.
\end{proof}
With this lemma in hand, Theorem \ref{blackboxlowerbound} is easily proven.
\begin{proof}[Proof of Theorem \ref{blackboxlowerbound}]
    Suppose there exists an algorithm for search on $x \in \2^N$ making $q = o(\log n)$ existential queries and $\poly(n)$ classical standard queries. Then, by Lemma \ref{lowerboundlem} there is an algorithm for binary search on strings of length $n$ making $q + O(\log\log n) = o(\log n)$ standard queries. This contradicts Ambainis' lower bound on binary search, which states that $\Omega(\log n)$ queries are required for binary search on size $n$ strings \cite{ambainis1999better}. Therefore, any search algorithm needs to make at least $q = \Omega\log n$ existential queries, even if it also makes $\poly(n)$ classical standard queries.
\end{proof}

Corollary \ref{cor:blackboxnoapxcount} is an easy consequence of the previous theorem.

\begin{repcorollary}{cor:blackboxnoapxcount}
    Any quantum algorithm that is given existential query access to a string $x \in \2^N$ and outputs an estimate $c$ such that $2^{|x| - 1} \le c < 2^{|x|}$, where $|x|$ is the Hamming weight of $x$, needs to make at least $\Omega(\log\log N)$ queries to the oracle.
\end{repcorollary}
\begin{proof}
    We show that the existence of such an approximate counter making $o(\log \log N) = o(\log n)$ $\exists$-queries implies the existence of an algorithm for search making $o(\log n)$ $\exists$-queries. Using the approximate counter and Lemma \ref{VVlem} we can reduce $x$ to be of Hamming weight 1 with constant probability and $o(\log n)$ $\exists$-queries. The index of the unique $1$ can then be extracted using the BV trick (Lemma \ref{iranilemma}) and a single $\exists$-query.
\end{proof}

\section{Conditional lower bound on number of $\NP$ queries for approximate counting}

\label{sec:whitebox}
In order to prove Corollary \ref{cor:noapxcount} we will first concern ourselves with almost-uniform isolation algorithms. We will first prove the following theorem stating consequences of the existence of almost-uniform isolation algorithms making $o(\log n)$ $\NP$ queries. Thereafter, we will show that approximate counting with $o(\log n)$ $\NP$ queries implies almost-uniform isolation with $o(\log n)$ $\NP$ queries.
\begin{theorem}
    Let $\epsilon > 0$ and $p \in (0,1]$ be constants such that there exists an $\epsilon$-almost-uniform isolation algorithm $A_{iso}$ with success probability $p$ making $Q(n) = o(\log n)$ queries to an $\NP$-oracle. Then, $\BPP^{\NP\left[o(n)\right]}$ contains a $\Ptime^{\NP}$-complete problem if this algorithm is classical and $\FBQP^{\NP\left[o(n)\right]}$ contains an $\FP^{\NP}$-complete problem if it is quantum.
\end{theorem}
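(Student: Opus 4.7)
The plan is to turn the hypothetical isolation algorithm $A_{iso}$ into an approximately uniform sampler on $\sol(\varphi)$ and then iteratively peel off the lexicographic prefix of the solution space until only the lex-least satisfying assignment remains. Since computing this lex-least solution is $\FP^{\NP}$-complete by Krentel \cite{krentel1986complexity}, and its bit-extraction decision variant is $\Ptime^{\NP}$-complete, an $o(n)$-query algorithm for this task yields the claimed inclusions in $\FBQP^{\NP[o(n)]}$ (quantum case) and $\BPP^{\NP[o(n)]}$ (classical case). To build the sampler, run $A_{iso}$ on $\varphi$ to obtain an isolator $u$ and apply the BV trick (Lemma \ref{iranilemma}) to $\varphi \wedge u$, extracting the unique solution of $\varphi \wedge u$ with a single additional $\NP$ query. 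Whether isolation actually succeeded (i.e.\ whether $\varphi \wedge u$ has a unique solution) can be detected with $O(1)$ further $\NP$ queries. With probability at least $p$ the round succeeds and outputs some $x \in \sol(\varphi)$ with $\Pr[x = y \mid \text{success}] \le (1+\epsilon)/|\sol(\varphi)|$ for every $y \in \sol(\varphi)$, at cost $o(\log n)$ queries. To harvest many samples from a single run, form $\Phi_k(x_1,\ldots,x_k) \coloneqq \bigwedge_{j=1}^{k} \varphi(x_j)$ on $k = n^2$ disjoint blocks of variables, so that $\sol(\Phi_k) = \sol(\varphi)^k$; one run of the sampler on $\Phi_k$ still uses only $o(\log(nk)) = o(\log n)$ queries and, conditioned on success, returns a tuple whose joint distribution $\mu$ on $\sol(\varphi)^{n^2}$ satisfies $\mu(\mathbf{x}) \le (1+\epsilon)/|\sol(\varphi)|^{n^2}$ for every tuple.

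Set $\varphi_0 \coloneqq \varphi$, $r \coloneqq \lceil n/\log n\rceil$, and fix an attempt budget $T \coloneqq \lceil 3r/p\rceil$. Sequentially attempt rounds $i = 0,1,\ldots$: run the $n^2$-sample procedure on $\varphi_i$, check whether the round succeeded, and if so let $x_{\min}^{(i)}$ be the lex-smallest among $(x^{(1)},\ldots,x^{(n^2)})$ and set $\varphi_{i+1}(x) \coloneqq \varphi_i(x) \wedge (x \le x_{\min}^{(i)})$; otherwise retry $\varphi_i$. Stop once $r$ successful rounds have accumulated or $T$ total attempts are exhausted. By the negative-binomial Chernoff bound, $T$ attempts yield at least $r$ successes except with probability $\exp(-\Omega(r)) = o(1)$. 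The key concentration claim is: if $M \coloneqq |\sol(\varphi_i)| > 1$ and $t$ is the lex-$\lceil M/n \rceil$-th element of $\sol(\varphi_i)$, then
\begin{equation}
\Pr\bigl[x_{\min}^{(i)} > t \,\big|\, \text{round } i \text{ succeeded}\bigr] \le (1+\epsilon)\Bigl(1 - \tfrac{1}{n}\Bigr)^{n^2} \le (1+\epsilon)\, e^{-n},
\end{equation}
obtained by summing the upper bound on $\mu$ over the at-most $(M(1 - 1/n))^{n^2}$ tuples whose entries all exceed $t$. Hence $|\sol(\varphi_{i+1})| \le M/n$ except with probability $e^{-\Omega(n)}$; a union bound across the $r$ successful rounds keeps the total failure probability $o(1)$, and when every round shrinks by a factor of $n$ we have $|\sol(\varphi_r)| \le 2^n/n^r \le 1$, so the last $x_{\min}^{(i)}$ is the lex-least solution of $\varphi$. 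The worst-case query cost is $T \cdot o(\log n) = o(n)$, yielding the $\FBQP^{\NP[o(n)]}$ statement. In the classical case, the $j$-th bit of the lex-least assignment is $\Ptime^{\NP}$-complete under poly-time Turing reductions, so the same procedure places a $\Ptime^{\NP}$-complete problem into $\BPP^{\NP[o(n)]}$.

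The main obstacle is the concentration step, since $\epsilon$-almost-uniformity on $\sol(\Phi_k)$ is strictly weaker than having $k$ independent $\epsilon$-almost-uniform samples from $\sol(\varphi)$ — the definition provides only a single global $(1+\epsilon)$ slack on the full tuple distribution rather than one per coordinate. The bound $(1+\epsilon)(1 - 1/n)^{n^2}$ is exactly the right strength, which is precisely why $k = n^2$ (rather than, say, $k = n$) samples are required. A secondary subtlety is amplification: because isolation has only constant success probability $p$, naive fixed-factor repetition either decays across rounds or pushes the query budget to $o(n \log n)$; the adaptive-retry scheme above circumvents this via the negative-binomial tail inequality, but it depends on efficiently detecting whether a round was successful, which is possible because both "$\varphi \wedge u$ is satisfiable'' and "$\varphi \wedge u$ has at least two solutions'' are single-query $\NP$ predicates. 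Finally, one verifies that the accumulated lex-upper-bound constraints keep each $\varphi_i$ a polynomial-size formula of the form $\varphi \wedge \chi$, as required by Theorem \ref{thm:upperbound}, so the iterated calls into $A_{iso}$ remain well-formed throughout.
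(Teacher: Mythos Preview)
Your argument is essentially the paper's for the quantum case, and the concentration and round-counting are handled correctly. The genuine gap is the classical case. Your sampler uses the BV trick (Lemma~\ref{iranilemma}) in \emph{every} round to explicitly extract the $n^2$ solutions and hence the string $x_{\min}^{(i)}$, which you then hard-code into the constraint $x \le x_{\min}^{(i)}$ of $\varphi_{i+1}$. The BV trick is a quantum subroutine; a classical machine has no single-query way to extract a unique witness (that would give $\FNP \subseteq \FP^{\NP[1]}$, and hence collapse $\Ptime^{\NP}$ to $\NP$). If you replace BV by classical bit-by-bit extraction you spend $\Theta(n)$ queries per round, and across $\Theta(n/\log n)$ rounds the total becomes $\Theta(n^2/\log n)$, not $o(n)$. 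So ``the same procedure'' does \emph{not} place a $\Ptime^{\NP}$-complete problem in $\BPP^{\NP[o(n)]}$.

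The paper avoids this by never materialising the sampled solutions. Instead it carries them \emph{implicitly}: after round $r$ it has a formula $\Psi_r$ on $\Theta(r\cdot n^3)$ variables whose unique satisfying assignment encodes all solutions picked so far, and the next round's formula $\chi_{r+1}$ conjoins $n^2$ fresh copies of $\varphi$ with $\Psi_r$ and with lex-inequality clauses comparing the fresh blocks to the previous round's blocks \emph{as variables}, not as constants. All of this is pure formula manipulation plus $\NP$ queries, so it is classical; the BV trick (or, classically, a single bit query) is invoked only once at the very end. To repair your proof you need this implicit-tracking idea; the explicit-extraction shortcut is exactly what makes the quantum case easier and the classical case fail.
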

\begin{proof}
    We will give a $\BPP^{\NP[o(n)]}$ algorithm to isolate the lexicographically least solution of an input formula $\varphi$, as outputting the last bit of this solution is $\Ptime^{\NP}$-complete \cite{krentel1986complexity}. The idea of the algorithm will be as follows. We work in rounds. In the first round we sample $n^2$ almost-uniformly random solutions of $\varphi$. We will not explicitly know what these solutions are, but we can find a formula to which they are the unique solution. Next, we sample $n^2$ solutions of $\varphi$ among all solutions that are lexicographically smaller than the all previously found solutions. We keep going like this until only one solution remains, which will be the lexicographically least solution of $\varphi$. We show that with high probability, the number of solutions that are smaller than the least solution found yet decreases by at least a factor $\frac{1}{n}$ every round. Therefore, we will, with high probability, need at most $\log_n(|sol(\varphi)|) = \frac{\log(|sol(\varphi)|)}{\log n} \le \frac{n}{\log n}$ rounds to isolate the lexicographically least solution of $\varphi$.

    In the first round we apply $A_{iso}$ to $\Psi_1(\vec{x}_1, \dots, \vec{x}_{n^2}) \coloneqq \varphi(\vec{x}_1) \wedge \dots \wedge \varphi(\vec{x}_{n^2})$. Here the $\vec{x}_i$ denote fresh sets of variables and we use the vector notation to emphasize that they are $n$-bit strings and not bits. The result of this application will be $u_1(\vec{x}_1, \dots, \vec{x}_{n^2})$ such that $\Psi_{1} \wedge u_1$ has a unique solution. This unique solution will be the concatenation of $n^2$ solutions of $\varphi$.

    In round $r+1$ we will do the following. From the previous round we have already constructed $$\Psi_r(\vec{x}_{r,1}, \dots, \vec{x}_{r,n^2}, \dots, \vec{x}_{1,1}, \dots, \vec{x}_{1,n^2})$$ with a unique solution. In this unique solution, $\vec{x}_{i,1}, \dots, \vec{x}_{i, n^2}$ will be the solutions to $\varphi$ picked in round $i$. We set
    \begin{align}
        \label{bigeq}
        \begin{split}
            \chi_{r+1}(\vec{x}_{r+1,1}, \dots, \vec{x}_{r+1,n^2}, \dots, \vec{x}_{1,1}, \dots, \vec{x}_{1,n^2}) &\coloneqq \varphi(\vec{x}_{r+1, 1}) \wedge \dots \wedge \varphi(\vec{x}_{r+1, n^2}) \\ 
            & \wedge \Psi_{r}(\vec{x}_{r,1}, \dots, \vec{x}_{r,n^2}, \dots, \vec{x}_{1,1}, \dots, \vec{x}_{1,n^2}) \\
            & \wedge \vec{x}_{r+1, 1} <_{lex} \vec{x}_{r,1} \wedge \dots \wedge \vec{x}_{r+1,1} <_{lex} \vec{x}_{r,n^2} \\
            & \wedge \vec{x}_{r+1, n^2} <_{lex} \vec{x}_{r,n^2} \wedge \dots \wedge \vec{x}_{r+1,1} <_{lex} \vec{x}_{r,n^2}.
        \end{split}
    \end{align}
    In any satisfying assignment of $\chi_{r+1}$, the first line of the RHS enforces that $\vec{x}_{r+1,1}, \dots, \vec{x}_{r+1,n^2}$ are solutions of $\varphi$. The second line makes sure that $\vec{x}_{r,1}, \dots, \vec{x}_{r,n^2}, \dots, \vec{x}_{1,1}, \dots, \vec{x}_{1,n^2}$ are set to the unique solutions picked in previous rounds. The third and fourth lines make sure that the new solutions are lexicographically strictly smaller than any solution picked in a previous round.\footnote{Note that comparing only to the previous rounds solutions suffices.}

    We now pick the new solutions of round $r+1$ by applying $A_{iso}$ to $\chi_{r+1}$. We call the round is successful if $A_{iso}$ succeeds, i.e. if its output $u_{r+1}$ is such that $\chi_{r+1} \wedge u_{r+1}$ has a unique solution. We can check if $A_{iso}$ succeeded by spending 2 $\NP$ queries.\footnote{One query is used to check if the formula is satisfiable and the other is used to check if there are two or more distinct solutions.} In case of a success $\Psi_{r+1} \coloneqq \chi_{r+1} \wedge u_{r+1}$ will have a unique solution, which is picked almost uniformly at random from all solutions of $\chi_{r+1}$ (by definition of $A_{iso}$). 
    In this unique solution $\vec{x}_{r+1,1}, \dots, \vec{x}_{r+1, n^2}$ will be the newly picked solutions. By construction they will be smaller than the solutions found in the previous rounds. The previously picked solutions $\vec{x}_{r,1}, \dots, \vec{x}_{r,n^2}, \dots, \vec{x}_{1,1}, \dots, \vec{x}_{1,n^2}$ will be the same as in previous rounds because the second line of Equation \ref{bigeq} has a unique solution. 
    Finally, we check if there are still smaller solutions available by checking if
    \begin{equation}
        \varphi(y) \wedge \Psi_{r+1}(\vec{x}_{r+1,1}, \dots, \vec{x}_{r+1,n^2}, \dots, \vec{x}_{1,1}, \dots, \vec{x}_{1,n^2}) \wedge y <_{lex} \vec{x}_{r+1, 1} \wedge \dots \wedge y <_{lex} \vec{x}_{r+1, n^2}
    \end{equation}
    is still satisfiable. If it is satisfiable we proceed to the next round and if it is not then, in the unique solution of $\Psi_{r+1}$, one of the $\vec{x}_{r+1, i}$ will be the minimal solution of $\varphi$. Therefore, 
    \begin{equation}
        \varphi(y) \wedge \Psi_{r+1}(\vec{x}_{r+1,1}, \dots, \vec{x}_{r+1,n^2}, \dots, \vec{x}_{1,1}, \dots, \vec{x}_{1,n^2}) \wedge y \le_{lex} \vec{x}_{r+1, 1} \wedge \dots \wedge y \le_{lex} \vec{x}_{r+1, n^2}
    \end{equation}
    will have a unique solution where $y$ is the lexicographically least assignment of $\varphi$ (note the use of $\leq$ instead of $<$). Asking the oracle if this formula is still satisfiable with the last bit of $y$ set to $1$ will then tell us the last bit of the lexicographically least solution of $\varphi$. Alternatively, a $\BQP$-machine can use the BV trick to obtain the \emph{entire} lexicographically least solution with one query and solve the $\FP^{\NP}$-complete problem of outputting the lexicographically least solution of $\varphi$ \cite{krentel1986complexity}.
    \\
    We will proceed by proving that this algorithm succeeds with probability at least $\frac{2}{3}$ and makes at most $o(n)$ $\NP$ queries using the following claims:

    \begin{claim}[Number of successful rounds needed]
        For sufficiently large $n$, the probability that the algorithm described above has not terminated after $\frac{n}{\log n}$ \emph{successful} rounds is less than $\frac{1}{6}$ (a successful round is a round in which a unique solution remains after the application of $A_{iso}$).
    \end{claim}

    \begin{claim}[Probability of successful rounds]
        The probability that after $\frac{2n}{p \log n}$ rounds there have not been $\frac{n}{\log n}$ \emph{successful} rounds is at most $\frac{1}{6}$ for sufficiently large $n$. Here, recall $p$ is the success probability of $A_{iso}$.
    \end{claim}

    From these claims it follows that with probability at least $\frac{2}{3}$, the algorithm will, after at most $\frac{2n}{p\log n}$ rounds, have terminated. In every round, $Q(\poly(n))$ queries are made because only a polynomial amount of terms are added to $\Psi_r$ every round (recall that $Q(n)$ is the number of queries made by $A_{iso}$). Since $Q(n) = o(\log n)$ by assumption, we have $Q(n^c) = o(c\cdot \log n) = o(\log n)$. Hence the algorithm makes at most $\frac{2n}{p\log n} o(\log n) = o(n)$ queries.

    Finally, we prove the two claims.

    \begin{subproof}[Proof of claim 1]
        Let $\vec{y}_1 < \dots < \vec{y}_k$ denote all solutions to $\varphi$ smaller than $\vec{y}_{min,r-1}$, the smallest solution found in round $r-1$. Define $g = \ceil*{\frac{k}{n}}$. We now compute the probability that, in a single round $r$, $\vec{y}_{min, r}$ is larger than $\vec{y}_{g}$ as
        \begin{align}
            \Pr[\vec{y}_{min, r} >_{lex} \vec{y}_{g}] &= \Pr\left[\vec{x}_{r,1} >_{lex} \vec{y}_{g} \wedge \dots \wedge \vec{x}_{r,n^2} >_{lex} \vec{y}_{g}\right] \\
            &= \Pr[(\vec{x}_{r,1}, \dots, \vec{x}_{r,n^2}) \in \{\vec{y}_{g+1}, \dots, \vec{y}_{k}\}^{n^2}] \\
            &\le |\{\vec{y}_{g+1}, \dots, \vec{y}_{k}\}^{n^2}|\cdot \frac{1 + \epsilon}{|\{\vec{y}_1, \dots, \vec{y}_k\}^{n^2}|} \\ \label{unionboundenepsiso}
            &= (k-g)^{n^2} \frac{1+\epsilon}{k^{n^2}} \\
            &\le \left(k\left(1-\frac{1}{n}\right)\right)^{n^2} \frac{1+\epsilon}{k^{n^2}}\\
            &= (1+\epsilon)\left(1-\frac{1}{n}\right)^{n^2},
        \end{align}
        where Equation \ref{unionboundenepsiso} follows from a union bound and the definition of an $\epsilon$-almost-uniform isolation algorithm
        Hence, by a union bound, the probability that in at least one of $\frac{n}{\log n }$ successful rounds $\vec{y}_{min,r} >_{lex} \vec{y}_{goal,r}$, i.e., the probability that in at least on of the rounds the search space is not cut down by at least a factor $\frac{1}{n}$ is
        \begin{equation}
            \Pr[\exists r \le \frac{n}{\log n} \text{ s.t. } \vec{y}_{min,r} >_{lex} \vec{y}_{g,r}] \le (1+\epsilon) \frac{n}{\log n}\left(1 - \frac{1}{n} \right)^{n^2},
        \end{equation}
        of which the right-hand side goes to $0$ as $n$ goes to $\infty$. The claim follows.
    \end{subproof}

    \begin{subproof}[Proof of claim 2]
        Each round succeeds with probability $p$. After $\frac{2n}{p\log n}$ rounds the expected number of successful rounds is $\frac{2n}{\log n}$. By a Chernoff bound we have:
        \begin{equation}
            \Pr[\# \text{successes} < \frac{n}{\log n}] \le \exp\left(-\frac{n}{4p\log n}\right).
        \end{equation}
        For sufficiently large $n$, the right-hand side will indeed be at most $\frac{1}{6}$.
    \end{subproof}

\end{proof}

Corollary \ref{cor:noapxcount} follows by showing that an algorithm for approximate counting induces an algorithm for almost-uniform isolation with the same number of $\NP$ queries.

\begin{repcorollary}{cor:noapxcount}
    If there exists a classical randomized poly-time algorithm for approximate counting making $o(\log n)$ $\NP$ queries, then $\BPP^{\NP[o(n)]}$ contains a $\Ptime^{\NP}$-complete problem. Similarly, if there is a poly-time quantum algorithm for approximate counting making $o(\log n)$ $\NP$ queries, then $\BQP^{\NP[o(n)]}$ contains an $\FP^{\NP}$-complete problem.
\end{repcorollary}
\begin{proof}
    The approximate counting algorithms can be used to make an almost-uniform isolation algorithm as follows. First, use the approximate counting algorithm to find $k$ such that $2^{k-2} \le |\sol(\varphi)| \le 2^{k-1}$. Next, choose a random hash function $h$ from a set of pairwise independent hash functions from $\2^n$ to $\2^k$. By Lemma \ref{VVlem}, the formula $\varphi(x) \wedge h(x) = 0^k$ will then have a unique solution with probability at least $\frac{1}{8}$.\footnote{We only require almost-uniformity in the case that there is a unique solution.} We claim that in this case the unique solution will be distributed almost-uniformly at random among all solutions of $\varphi$. Our proof of this claim is based on work by Dellanoy and Meel \cite[Lemma~3]{delannoy2022almost}.

    Before we prove the claim let us first introduce some notation. Let the random variable $N$ denote the number of solutions of $\varphi(x) \wedge h(x) = 0^k$ and let $SC$ denote the event that the approximate counting was successful, i.e. $2^{k-2} \le |\sol(\varphi)| \le 2^{k-1}$. We assume that $SC$ occurs with probability $1 - \delta$. For any fixed $x$, we are interested in
    \begin{align}
        \Pr[h(x) = 0^k| N = 1] &= \frac{\Pr[h(x)=0^k \wedge N = 1]}{\Pr[N = 1]} \\
        &= \frac{\Pr[h(x)=0^k \wedge N = 1]}{\sum_{y \in \sol(\varphi)} \Pr[h(y) = 0^k \wedge N = 1]} \\
        &= \frac{\Pr[N = 1|h(x) = 0^k] \Pr[h(x) = 0^k]}{\sum_{y \in \sol(\varphi)} \Pr[N = 1|h(y) = 0^k]\Pr[h(y) = 0^k]} \\
        &= \frac{\Pr[N = 1|h(x) = 0^k]}{\sum_{y \in \sol(\varphi)} \Pr[N = 1|h(y) = 0^k]} \label{22}
    \end{align}
    and want to show that this probability is $\le \frac{\epsilon}{|\sol(\varphi)|}$.
    Let $\chi(y)$ be $1$ if $h(y) = 0^k$ and $0$ otherwise. For any $x \in \sol(\varphi)$ we have:
    \begin{align}
        \mathbb{E}\left[N\middle| h(x) = 0^k \wedge SC\right] &= \mathbb{E}\left[\sum_{y \in \sol(\varphi)} \chi(y) \middle| h(x) = 0^k \wedge SC\right] \\
        &= \sum_{y \in \sol(\varphi)} \mathbb{E}\left[\chi(y) \middle| h(x) = 0^k \wedge SC\right] \\
        &= 1 + \frac{|\sol(\varphi)| - 1}{2^{-k}}
    \end{align}
    where we use that $h$ is a $2$-wise independent hash function. By Markov's inequality this means that for all $x \in \sol(\varphi)$
    \begin{align}
        \Pr[N = 1| h(x) = 0^k \wedge SC] &= 1 - \Pr[N \ge 2| h(x) = 0^k \wedge SC] \\
        &\ge 1 - \frac{\mathbb{E}[N| h(x) = 0 \wedge SC]}{2} \\
        &\ge \frac{1}{2} - \frac{|\sol(\varphi)| - 1}{2^{k+ 1}} \\
        &\ge \frac{1}{4}.
    \end{align}
    Combining with Equation \ref{22} and using that $\Pr[N = 1 | h(y) = 0^k] \ge \Pr[N = 1 \wedge SC | h(y) = 0^k]$ and $\Pr[N = 1|h(x) = 0^k] \le 1$ gives
    \begin{align}
        \Pr[h(x) = 0^k| N = 1] &\le \frac{1}{\sum_{y \in \sol(\varphi)} \Pr[N = 1 \wedge SC|h(y) = 0^k]} \\
        &\le \frac{1}{\sum_{y \in \sol(\varphi)} \Pr[N = 1|h(y) = 0^k \wedge SC]\Pr[SC]} \\
        &\le \frac{4}{(1-\delta)|\sol(\varphi)|}.
    \end{align}
    Note that this holds for \emph{any} $x$. Hence the existence of an approximate counter gives us an almost-uniform isolation algorithm with $p = \frac{1-\delta}{8}$ and $\epsilon = \frac{3 + \delta}{1-\delta}$. Furthermore, the approximate counter and isolation algorithm will make the same amount of $\NP$ queries (i.e. $o(\log n)$).
\end{proof}

\section{Acknowledgements}
The authors would like to thank Ronald de Wolf, Dieter van Melkebeek, Osamu Watanabe, Henry Yuen, Scott Aaronson, William Kretschmer and Eric Allender for helpful comments and remarks. SG was supported by the DFG under grant numbers 432788384 and 450041824, the BMBF within the funding program “Quantum Technologies - from Basic Research to Market” via project PhoQuant (grant number 13N16103), and the project “PhoQC” from the programme “Profilbildung 2020”, an initiative of the Ministry of Culture and Science of the State of Northrhine Westphalia. JK was supported by the DFG under grant number 450041824.

\bibliographystyle{alpha}
\bibliography{biblio.bib}

\end{document}